\newtheorem{lemma}{Lemma} 
\newtheorem{proposition}{Proposition} 
\newtheorem{theorem}{Theorem} 
\newtheorem{corollary}{Corollary} 
\newtheorem{claim}{Claim}
\theoremstyle{definition}
\newtheorem{definition}{Definition}
\newtheorem{example}{Example}  
\newtheorem{remark}{Remark}    
\DeclareMathOperator\eqp{\mathrel{\stackrel{\mbox{\normalfont\tiny $\log$}}{=}}}
\DeclareMathOperator\gep{\mathrel{\stackrel{\mbox{\normalfont\tiny $\log$}}{\ge}}}
\DeclareMathOperator\lep{\mathrel{\stackrel{\mbox{\normalfont\tiny $\log$}}{\le}}}
\newcommand{\pr}{\mathrm{Prob}}
\def\firstcircle{(150:1.75cm) circle (2.5cm)}
\def\secondcircle{(30:1.75cm) circle (2.5cm)}
\def\thirdcircle{(270:1.75cm) circle (2.5cm)}
\begin{document}

\title{Common information in well-mixing graphs  and applications to information-theoretic cryptography\footnote{%
This document is an extended version of the conference paper published in the 
\emph{Proceedings of the IEEE Information Theory Workshop}, Shenzhen, 2024,
\href{https://doi.org/10.1109/ITW61385.2024.10806994}{DOI: 10.1109/ITW61385.2024.10806994}.
In this version, we provide complete proofs of all results along with more detailed discussions.
Furthermore, we establish a bound on the communication complexity of secret key agreement (see Section~5) in a more general setting, which also covers communication protocols with private randomness.}
}

\author{Geoffroy Caillat-Grenier
\and 
Andrei Romashchenko
\and
Rustam Zyavgarov%
}

\maketitle

\begin{abstract}
We study the connection between mixing properties for bipartite graphs and materialization of the mutual information in one-shot settings. We show that mixing properties of a graph imply impossibility to extract the mutual information shared by the ends of an edge randomly sampled in such a graph. We apply these impossibility results to some questions motivated by information-theoretic cryptography. In particular, we show that communication complexity of a secret key agreement in one-shot setting is inherently uneven: for some inputs, almost all communication complexity inevitably falls on only one party (even for protocols with private randomness). Our main results are formulated and proven in the language of Kolmogorov complexity;  we further develop tools to reinterpret some of them in terms of Shannon entropy.
\end{abstract}


\section{Introduction}

In classical information theory, Shannon's mutual information between two random variables is both a fundamental concept and a powerful analytical tool. However, the definition of mutual information lacks a direct physical interpretation: the value of the mutual information between two sources of information does not necessarily correspond to any common data set shared by these sources. The question of a more direct and ``material'' interpretation of mutual information has been raised since the 1970s. In this context, Gács and Körner introduced the notion of common information of two random objects~\cite{gk1973}. The common information by Gács and Körner corresponds, loosely speaking, to the greatest common part that can be “easily” extracted from both correlated random sources.

Speaking slightly more precisely, the Gács--Körner common information of two sources $\cal X$ and $\cal Y$ is the maximal information contained in a $\cal Z$ such that $\cal Z$ contains negligibly small information conditional on $\cal X$ and conditional on $\cal Y$. To convert this phrase into a formal definition, we need to quantify the words “negligibly small,” and a neat and exact definition is inevitably technical.
The resulting quantity (the physically materializable part of the mutual information shared by two given sources) is below (and in some cases far below) the value of Shannon's mutual information. In a similar vein, Wyner studied the question of the smallest (simplest) object with respect to which a given pair of information sources would become independent~\cite{wyner}. Wyner's common information for two random sources is above (in some cases far above) Shannon's mutual information.

 \smallskip
 
Analogously, one of the central concepts in algorithmic information theory is Kolmogorov's mutual information.
Formally, the mutual information between bit strings $x$ and $y$ is defined as the difference between the total length of optimal programs that produce $x$ and $y$ separately, 
and the length of an optimal program that produces both $x$ and $y$ together.
This definition also lacks a concrete or intuitive interpretation.
In their seminal paper, G\'acs and K\"orner proposed a version of their notion of common information within the framework of Kolmogorov complexity.
They defined the common information of two binary strings $x$ and $y$ as the maximum complexity of a string $z$ that can be easily derived from either $x$ or $y$; that is, the conditional complexities of $z$ given $x$, and $z$ given $y$, are both small.
It is known that for certain pairs of strings, the value of this common information can be significantly smaller than the mutual information.
Wyner's notion of common information can also be adapted to the setting of Kolmogorov complexity:
given strings $x$ and $y$, one can minimize the complexity of a string $z$ such that $x$ and $y$ become conditionally independent given $z$; that is, the sum of the lengths of optimal programs transforming $z$ into $x$ and $y$ separately is essentially equal to the length of an optimal program transforming $z$ into the pair $(x,y)$.

The concept of common information is inherently subtle, and the balance between maximizing the complexity of a string $z$ and allowing some slack in its conditional complexities given $x$ and $y$ can involve intricate trade-offs,
see, e.g, \cite{muchnik-common-info,cmrsv,andrei-pairs}. 
In this paper, we investigate certain extremal cases---pairs of strings $(x,y)$ whose mutual information is largely non-materializable.
Loosely speaking, we show that for certain natural examples of pairs $(x,y)$  a string $z$ can absorb the entire mutual information between $x$ and $y$ only if it essentially contains either $x$ or $y$.

A distinctive feature of our approach is that it explores how the materialization of mutual information is linked to the combinatorial structure of the underlying objects.
We focus on pairs $(x,y)$ that can be interpreted as a ``typical'' edge of a graph with certain specific combinatorial properties.
In this context, revealing Wyner's common information corresponds to identifying an induced subgraph that is nearly a complete bipartite graph.

We establish several lower bounds on the size of such subgraphs and, correspondingly, impossibility results for the materialization of mutual information,
for graphs with strong mixing properties (see Theorems~\ref{th:1} and~\ref{th:2} below).
To capture the intuitive idea of mixing in a graph, we adopt two non-equivalent but complementary formalizations.
Our first argument (Theorem~\ref{th:1}) develops the method proposed in~\cite{muchnik-common-info,cmrsv}, utilizing bounds on the number of common neighbors shared by vertex pairs.
The other argument (Theorem~\ref{th:2}) relies on the spectral properties of the graph, which also capture its underlying mixing behavior.

Throughout the paper, we refer to the same recurrent example (or, more precisely, a class of examples) to which both of our approaches apply.
In this example, one of the two objects is a polynomial of bounded degree over a finite field, and the other is an evaluation point for this polynomial (a point on the graph of the polynomial).
Obviously, such objects share a large amount of mutual information. However, it is extremely hard to materialize it.
Both of our approaches work for this class of examples:
the bounded-neighborhood property follows from the fact that two low-degree polynomials cannot share many common evaluation points; the spectral bound for the underlying graph is less obvious, but it follows from a routine calculation. We believe that these examples (a polynomial and its evaluation points) are not only natural and intrinsically interesting, but might also have applications in adjacent areas, such as coding theory and cryptography.

Subsequently, we compare the methods underlying Theorems~\ref{th:1} and~\ref{th:2}. We employ both of them to derive information-theoretic inequalities tailored to graphs with mixing behavior (see Theorem~\ref{th:rustam}).
Our analysis shows that these two approaches capture different aspects of mixing in a graph, and that their applications are incomparable. We apply the established bounds to certain questions in information-theoretic cryptography, which we briefly outline below.

\smallskip
 
\emph{Application I.} 
Our first application concerns the problem of secret key agreement. In the setting introduced in~\cite{secret-key-1,secret-key-2}, one party (Alice) receives a random input $x$, while the other party (Bob) receives a correlated random input $y$. Their goal is to agree on a shared secret key $z$ through communication over a public channel. It is known that Alice and Bob can agree on a secret key whose length is approximately equal to the mutual information between $x$ and $y$. The classical communication protocol that achieves this goal (see the original construction in~\cite{secret-key-1,secret-key-2} and its adaptation\footnote{%
Usually, information-theoretic cryptography is built on the foundations of Shannon's theory. However, an alternative approach based on Kolmogorov complexity has its advantages.  
This approach was originally suggested in~\cite{antunes}.  
See also~\cite{andrei-marius,emirhan} for adaptations of this approach specifically to the problem of secret key agreement. %
} 
to the framework of Kolmogorov complexity in~\cite{andrei-marius}) can be described as follows. One of the parties (say, Alice) sends a hash of her input to the other party, with the hash size approximately equal to the conditional complexity of $x$ given $y$. This allows Bob to reconstruct $x$; this phase of the protocol is called information reconciliation.
Subsequently, both parties extract from $x$ a shared random key of length roughly equal to the mutual information between $x$ and $y$, such that this key is nearly independent of the information leaked to an eavesdropper; this phase is referred to as privacy amplification.

We prove that for certain classes of inputs $(x,y)$, this standard (highly asymmetric) protocol is essentially the only possible solution. We show (Theorem~\ref{th:secret_key_agreement}) that for certain pairs of inputs, it is necessary for either Alice to send Bob all the information required to reconstruct $x$
(conditional complexity of $x$ given $y$), 
or for Bob to send Alice all the information required to reconstruct $y$ (conditional complexity of $y$ given $x$); 
there is no way to divide the communication load more evenly between the two parties. Our result holds even when the protocol allows private randomness, that is, when Alice and Bob, in addition to their inputs $x$ and $y$, each have access to private sources of random bits, and these bits may be utilized during the protocol.

While the proof presented in this paper applies to incidences (lines and points) in a plane over finite fields of arbitrary characteristic, our argument is restricted to protocols that achieve an optimal key size, i.e., one that is close to the mutual information between Alice's and Bob's inputs.
It was recently shown that for protocols agreeing on keys of suboptimal size, a similar bound holds only for lines and points over prime fields; see \cite{mfcs2025}.

\smallskip
 
\emph{Application II.} Our second application deals with the setting proposed by An.~Muchnik in \cite{muchnik-crypto}. We assume that Sender wants to transmit to Receiver a \emph{clear message} $x$ in an encrypted form. The encryption is based on the assumption that Sender and Receiver share a \emph{secret key} $y$. The challenge is to prepare an encrypted message that reveals no information on $x$ to Eavesdropper who owns side information $z$ correlated with $(x,y)$. We show that for some correlation between $(x,y)$ and $z$ it is impossible to find an encrypted message that allows Receiver to compute the clear message $x$ and, at the same time, reveals no information on $x$ to Eavesdropper
(see Proposition~\ref{p:muchnik} and Proposition~\ref{prop:4}). 
It is crucial that this effect is explained by the combinatorial nature of the correlation between $x$, $y$, and $z$, and not simply by the entropy values of these quantities.
This result allow us to answer an open question from \cite{vereshchagin_crypto}.
 
 \smallskip

\emph{From Kolmogorov to Shannon.}
Our main results are formulated and proven in the framework of Kolmogorov complexity. However, some of our results can be transposed into the framework of classical Shannon information theory. We show (see Section~\ref{s:shannon}) that our \emph{Application I} (an impossibility result for secret key agreement) implies a homologous statement in the probabilistic framework.  
In this setting, Alice and Bob receive as inputs values of correlated random variables $\mathcal{X}$ and $ \mathcal{Y}$, respectively; for example, one of them receives a random polynomial, and the other a random evaluation point of this polynomial.  
The goal is to use communication over a public channel to produce a random variable $\cal Z $ with high entropy and negligible mutual information with the transcript of the protocol (which is revealed to the eavesdropper).  
We prove that for some probability distributions on the inputs (including our recurrent example---a random polynomial and its random evaluation point), such secret key agreement is only possible if either Alice sends approximately $ H(\mathcal{X} | \mathcal{Y}) $ bits of information to Bob, or Bob sends approximately $ H(\mathcal{Y}|  \mathcal{X}) $ bits of information to Alice.  
Our proof is based on a fairly standard connection between Kolmogorov complexity and Shannon entropy.

Although we show that, for our setting, in any communication between Alice and Bob either Alice sends approximately $ H(\mathcal{X} | \mathcal{Y}) $ bits, or Bob sends approximately $ H(\mathcal{Y} | \mathcal{X}) $ bits,  
we cannot claim that the average size (and the entropy) of Alice's and Bob's messages is as large as $ H(\mathcal{X} | \mathcal{Y})$ and $ H(\mathcal{Y} | \mathcal{X}) $, respectively.  
Indeed, there exist protocols such that in half of all cases Alice sends approximately $ H(\mathcal{X} | \mathcal{Y}) $ bits of information (and Bob sends nothing),  
and in the other half of the cases Bob sends approximately $ H(\mathcal{Y} | \mathcal{X}) $ bits of information (and Alice sends nothing).  
It seems that such a result (a ``non-convex'' lower bound for the sizes of Alice's and Bob's messages) cannot be proven directly using conventional inequalities for Shannon entropy (which are inherently convex).  
While the usual entropic tools do not suit this type of phenomenon well, Kolmogorov complexity is perfectly tailored for such a setting.  
Let us also mention that such ``non-convex'' behavior is specific to the one-shot setting and cannot occur when the inputs $ \mathcal{X} $ and $ \mathcal{Y} $ are produced by i.i.d.\ sources,  
where we can apply the standard time-sharing argument.

 \smallskip
 
The rest of the paper is organized as follows.  
In Section~\ref{s:preliminaries}, we introduce the main combinatorial, algebraic, and information-theoretic tools used in the proofs of our main results.  
In Section~\ref{s:mixing}, we prove our main results on non-extractability of mutual information (Theorems~\ref{th:1} and~\ref{th:2}).  
In Section~\ref{s:inequalities}, we transform the bounds from Section~\ref{s:mixing} into linear inequalities for Kolmogorov complexity and show that the techniques from Theorems~\ref{th:1} and~\ref{th:2} are incomparable,  
i.e., neither of the two methods is strictly stronger than the other.  
In Section~\ref{s:app1}, we use the bounds from Section~\ref{s:mixing} to prove negative results on the communication complexity of secret key agreement.  
In Section~\ref{s:shannon}, we use the result proven in Section~\ref{s:mixing} to establish a similar statement in the framework of Shannon's information theory.  
Finally, in Section~\ref{s:app2}, we prove impossibility results on conditional descriptions secure against an eavesdropper with pre-known information (in Muchnik's setting).

\subsection{Notation}

\begin{itemize}
\item $|\cal S|$ stands for the cardinality of a finite set $\cal S$
\item $G=(L, R; E)$ stands for a bipartite graph where $L \cup R$ (disjoint union) is  the set of vertices and $E\subset L \times R$ is the set of edges; for every vertex $v$ in the graph, $\Gamma(v)$  denotes the set of all neighbors of $v$
\item $H({\cal X})$ denotes Shannon's entropy of a random variable $\cal X$
\item $C(x)$ and $C(x|y)$ stand for Kolmogorov complexity of a string $x$ and, respectively, conditional Kolmogorov complexity of $x$ conditional on $y$, see \cite{li-vitanyi,suv}. We use a similar notation for more involved expressions, e.g., $C(x,y|v,w)$ stands for Kolmogorov complexity of the concatenation of $x$ and $y$ (or a code for the pair $(x,y)$) conditional on the concatenation (the pair) of $v$ and $w$ 
\item $I(x:y) := C(y)+C(y|x)$ and $I(x:y|z) := C(y|z) - C(y|x,z)$ stand for  information in $x$ on $y$ and, respectively, information in $x$ on $y$ conditional on $z$ \item $I(x:y:z) := I(x:y) - I(x:y|z)$ denotes the triple mutual information
\end{itemize}
Many natural equalities and inequalities for Kolmogorov complexity are valid only up to a logarithmic additive term, e.g.,
\[
C(x,y) = C(x) + C(y|x) \pm O(\log n),
\]
where $n$ is the sum of lengths of $x$ and $y$ (this is the chain rule a.k.a. Kolmogorov--Levin theorem, see \cite{zl}). To simplify the notation, we write $A\lep B$ instead of 
\[
A \le B + O(\log N),
\]
where $N$ is the sum of lengths of all strings involved in the expressions $A$ and $B$. 
Similarly we define $A\gep B$ (which means $B\lep A$) and $A\eqp B$ (which means $A\lep B$ and $B\lep A$). 
For example, the chain rule can be expressed as
\[
C(x,y) \eqp C(x)+C(y|x);
\]
the well known fact of symmetry of the mutual information can be expressed as
\[
I(x:y) \eqp C(x)+ C(y) - C(x,y),
\]
and a symmetric expression for the triple mutual information
\[
\begin{array}{rcl}
I(x:y:z) &\eqp& C(x)+C(y)+C(z)
-C(x,y) - C(x,z) - C(y,z) + C(x,y,z)
\end{array}
\]

\section{Preliminaries}
\label{s:preliminaries}

Here we define the mathematical objects that will be used in this paper and prove useful information-theoretic properties about them.

\begin{definition}
We say that a bipartite graph $G=(L,R;E)$ has $d$-\emph{bounded common neighborhood on the right} if for every pair of vertices $v_1,v_2\in R$, $v_1 \not= v_2$, there are at most $d$ vertices in $L$ that are neighbors of both $v_1$ and $v_2$.
\end{definition}
\begin{example}\label{e:poly}
Let $\mathbb{F}$ be a finite field, $L$ be the set of pairs $(x_1,x_2)\in \mathbb{F}^2$, and $R$ be the set of all polynomials
\[
S(t)= s_0 + s_1t + \ldots + s_d t^d
\]
 of degree at most $d$ over $\mathbb{F}$. Let $E$ consist of all edges that connect  a point $(x_1,x_2)$ from $L$ with a polynomial with coefficients $(s_0,s_1,\ldots, s_d)$ from $R$  if and only if 
\[
x_2 = s_0 + s_1x_1 + \ldots + s_d x_1^d, 
\]
i.e., $(x_1,x_2)$ belongs to the graph of the polynomial $ s_0 + s_1t + \ldots + s_d t^d$. This graph $G = (L,R;E)$ has $d$-bounded common neighborhood on the right since two polynomials of degree $d$ (representing two vertices in $R$) have at most $d$ points of intersection (common neighbors in $L$). 
\end{example}

Let $G=(L,R;E)$ be a bi-regular bipartite graph where all vertices in $L$ have degree $D_L$ and all vertices in $R$ have degree $D_R$. The adjacency matrix of such a graph is a symmetric square matrix of size $m\times m$, where $m = |L|+|R|$ is the total number of vertices. It is well known that all eigenvalues $\lambda_i$ of such a matrix are real numbers such that 
\[
\sqrt{D_L D_R}  = \lambda_1 \ge \lambda_2 \ge \ldots \ge \lambda_m = -\sqrt{D_L D_R},
\]
and $\lambda_i = -\lambda_{m-i+1}$ (the spectrum of a bipartite graph is symmetric), see, e.g. \cite{spectral-graph-theory}. The gap between $\lambda_1$ and $\lambda_2$ is called \emph{the spectral gap} of the graph. This value has deep combinatorial implications (smaller $\lambda_2$ implies better mixing properties of the graph, see \cite{expanders-survey}).

\begin{definition}
Let $G=(L, R; E)$ be a bipartite graph with degrees $D_L$ and $D_R$ for vertices in $L$ and $R$ respectively. We say that $G$ is a \emph{good spectral bipartite expander} if for the 2nd eigenvalue of this graph we have
$\lambda_2  \le \sqrt{\max\{ D_L, D_R\}}$.
\end{definition}
\begin{remark}
In contrast to a most conventional definition of an expander graph, we do not assume that $D_L$ and $D_R$ are bounded by $O(1)$. 
\end{remark}

\begin{proposition}
\label{p:standard-profile}
Denote $n:=\log q$. Let $G = (L,R;E)$ be a graph 
such that
\[
|L| = q^2 , \ |R| = q^{d+1},
\]
and the degree of each  vertex in $R$ is equal to $q$, and degree of each vertex in $L$ is equal to $q^{d}$ (hence $|E| = q^{d+2}$).

Now, if we impose the uniform distribution on the set of edges $E$ of this graph, we obtain jointly distributed random variables  $({\cal X},{\cal Y})$ with Shannon's entropies 
\[
H({\cal X}) = 2n,\ H({\cal Y}) = (d+1)n,\ H({\cal X},{\cal Y}) = (d+2)n.
\]
Moreover, if the graph is given explicitly (the complete list of  vertices and edges of the graph can be found algorithmically given the parameters $n$ and $d$), 
then for the vast majority of pairs $(x, y) \in E$ we have
\begin{equation}
\label{eq:profile}
 C(x) \eqp 2n,\ C(y) \eqp (d+1)n,\ C(x,y) \eqp (d+2)n.
\end{equation}
\end{proposition}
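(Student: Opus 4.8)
The plan is to establish the Shannon entropy identities first, since they are essentially bookkeeping, and then bootstrap the Kolmogorov complexity statement from an explicit counting argument. For the entropy part, observe that because the distribution on $E$ is uniform and each vertex in $L$ has degree $2^n$, the marginal of $U$ is uniform on $L$, so $H(U) = \log|L| = (d+1)n$; similarly each vertex in $R$ has degree $2^{dn}$, so the marginal of $V$ is uniform on $R$ and $H(V) = \log|R| = 2n$. Since the joint distribution of $(U,V)$ is uniform on $E$ and $|E| = |L|\cdot 2^n = 2^{(d+2)n}$ (equivalently $|R|\cdot 2^{dn}$), we get $H(U,V) = \log|E| = (d+2)n$. (One should note these add up consistently: $H(U)+H(V)-H(U,V) = dn = H(U) - H(U|V)$, reflecting that $V$ determines $U$ up to $2^{dn}$ choices.)

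For the complexity statement, the upper bounds are immediate: given $n$ and $d$ (and the explicit description of the graph, which has size $O(\log)$ since it is algorithmically generated from $n,d$), any vertex $u\in L$ is specified by its index among $|L| = 2^{(d+1)n}$ elements, so $C(u) \lep (d+1)n$; likewise $C(v) \lep 2n$ and, specifying an edge by its index among $|E| = 2^{(d+2)n}$ edges, $C(uv) \lep (d+2)n$. The content is in the matching lower bounds, and here I would use a standard counting/incompressibility argument: the number of edges $(u,v)\in E$ with $C(uv) < (d+2)n - c$ is at most $2^{(d+2)n - c}$ (the number of programs of that length), which is a $2^{-c}$ fraction of all edges; choosing $c$ to be, say, $\log|E|$ or any slowly growing function kills all but a negligible fraction. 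So for the vast majority of edges $C(uv) \gep (d+2)n$. From $C(uv) \gep (d+2)n$ and the upper bounds $C(u)\lep (d+1)n$, $C(v)\lep 2n$ together with subadditivity $C(uv) \lep C(u) + C(v|u) \lep C(u) + C(v) \lep (d+3)n$ — wait, that overshoots; instead use $C(uv) \lep C(u) + C(v)$ directly and the chain rule $C(uv) \eqp C(u) + C(v|u)$. Then $C(u) \gep C(uv) - C(v|u) \gep (d+2)n - 2n = dn$ is too weak; the right move is $C(u) \gep C(uv) - C(v) \gep (d+2)n - 2n = dn$, still too weak, so one must instead argue that for most edges \emph{both} coordinates are near-incompressible simultaneously.

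The clean way: fix the negligible exceptional set $B_1 = \{(u,v)\in E : C(uv) < (d+2)n - \omega(\log n)\}$, which has density $\to 0$. For $(u,v)\notin B_1$ we have $C(uv) \gep (d+2)n$; combined with $C(v)\lep 2n$ and the chain rule $C(uv) \eqp C(v) + C(u|v)$, we get $C(u|v) \gep dn$, hence $C(u) \gep C(u|v) \gep dn$ — but we want $(d+1)n$. The resolution is that $C(u|v) \gep dn$ is in fact tight (since $V$ determines $U$ only up to $2^{dn}$ choices), so this route genuinely only gives $C(u)\gep dn$, not $(d+1)n$. To get $C(u) \gep (d+1)n$ for most $u$, I would instead count directly: the number of $u\in L$ with $C(u) < (d+1)n - c$ is at most $2^{(d+1)n-c}$, a $2^{-c}$ fraction of $L$; since each $u\in L$ is incident to exactly $2^n$ edges, such "bad" $u$'s account for at most a $2^{-c}$ fraction of $E$. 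The same argument handles $v$ using the regularity of $R$. Taking the union of the three negligible exceptional sets (bad $u$, bad $v$, bad $uv$) still leaves a $1 - o(1)$ fraction of edges, and on that set all three estimates in~\eqref{eq:profile} hold. The main obstacle is purely expository: one must be careful that "vast majority of edges" is controlled via the \emph{edge} measure, so each vertex-level exceptional set must be converted to an edge-density bound using biregularity, and one must check the additive $O(\log n)$ slack absorbs the constant $c$ and the $O(\log)$ cost of encoding $n$, $d$, and the graph's description.
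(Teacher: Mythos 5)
Your proof is correct and is exactly the ``standard counting'' the paper outsources to a reference: uniform marginals from biregularity for the entropy part, upper bounds by indexing, lower bounds by counting programs, and the key technical point---converting vertex-level incompressibility bounds to edge-density bounds via the regularity of $L$ and $R$ before taking a union bound over the three exceptional sets---is handled correctly. The detour through chain-rule attempts (which you correctly discard as too weak to recover $C(u)\gep(d+1)n$) is expository noise but does not affect the validity of the final argument.
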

\begin{proof}
The claim on the entropy values is straightforward. 
The equalities for Kolmogorov complexity follow from a standard counting argument (see e.g. \cite[Theorem 5]{suv}, and the discussion below).
\end{proof}
\begin{remark}
Proposition~\ref{p:standard-profile} applies to the graph from Example~\ref{e:poly} constructed for the field $\mathbb{F}$ of cardinality $q$.
\end{remark}
\begin{definition}
For a graph as in Proposition~\ref{p:standard-profile} we will say that an edge $(x,y)\in E$ is \emph{typical} if it satisfies ~\eqref{eq:profile}.
\end{definition}

\begin{lemma}
\label{p:spectrum}
The graph from Example~\ref{e:poly} constructed on the field $\mathbb{F}_{n}$ with $|\mathbb{F}_n| = q_n$ is a good spectral bipartite expander. In particular, this holds for $q_n = 2^n$.
\end{lemma}

\noindent See Appendix for the proof of Lemma~\ref{p:spectrum}.

\begin{definition}\label{d:privateRandomGraph}
Let $G = (L, R; E)$ be a bipartite graph, and let $m$ be an integer. 
We call by \emph{$m$-private randomness amplified $G$} 
the graph $\bar G = (\bar L, \bar R; \bar E)$, where 
\[
\begin{array}{rcl}
\bar L &:=& L\times\{0,1\}^m,\\
\bar R &:=& R\times\{0,1\}^m, 

\end{array}
\]
and the set of edges $\bar E$ consists of all the pairs $(\bar x, \bar y) = (\langle x, r_x \rangle, \langle y, r_y \rangle)$ such that $(x,y)$ is an edge in the original graph $G$ and $(r_x, r_y)$ is a pair of binary strings of size $m$. 
A matrix representation of $\bar G$ is the tensor product between $G$'s adjacency matrix and the all one matrix of size $2^m \times 2^m$.  
\end{definition}

This definition is useful to analyze communication protocols with private randomness. In a typical edge $(\langle x, r_x \rangle, \langle y, r_y \rangle)$ of $\bar G$, $x$ and $y$ can be interpreted as (correlated) inputs of two communicating parties, and $r_x$, $r_y$ can be understood  as outcomes of private sources of random bits used by the parties.

In this paper, we will always consider implicitly that the number of random bits $m$ is polynomial in the size of the graph considered (this is necessary for Lemma~\ref{l:induced-edges} to hold). 
However, this restriction can be waived. Indeed, a version of Newman's theorem 
(see \cite{newman} for the standard version; in our setting we need the adaptation proposed in\cite[Proposition 1]{emirhan}) claims that, roughly speaking, for any protocol of secret key agreement using private random bits, there is another protocol that solves the same communication problem and uses only a polynomial number of random bits. 

\begin{proposition}
\label{p:standard-aug-profile}
Let $G = (L,R;E)$ be a graph as in Proposition~\ref{p:standard-profile}, and
 let  $\bar G = (\bar L, \bar R; \bar E)$ be an $m$-private randomness amplified $G$.
If $G$ is given explicitly (the complete list of  vertices and edges of the graph can be found algorithmically given the parameters $n$ and $d$), 
then for the vast majority of pairs $(\bar x, \bar y) \in \bar E$ we have
\begin{equation}
\label{eq:profile-aug}
 C(\bar x) \eqp 2n+m,\ C(y) \eqp (d+1)n+m,\ C(x,y) \eqp (d+2)n+2m.
\end{equation}
\end{proposition}
\begin{proof}
We can determine the number of vertices in $\bar L, \bar R$ and the pairs in $\bar E$ and then conclude by using the standard counting argument from \cite[Theorem 5]{suv}. Every vertex in the $G$ has $2^m$ copies of itself in $\bar G$, hence $|\bar L| = 2^{2n + m}$ and $|\bar R| = 2^{(d+1)n + m}$. For every edge in $E$, there are $2^{2m}$ corresponding edges in $\bar E$, hence $|\bar E| = 2^{(d+2)n + 2m}$.
\end{proof}

\noindent We get the analogous definition of typicality for vertices in $\bar G$ by using \eqref{eq:profile-aug} instead of \eqref{eq:profile}.

Now we introduce some notation that will be crucial in the proof of our main technical results (Theorem~\ref{th:1} and Theorem~\ref{th:2}).

\begin{definition}\label{d:partitionsGraph}
Let  $ G=(L, R; E)$ be the graph from Example~\ref{e:poly} and let  $\bar G = (\bar L, \bar R; \bar E)$ be the $m$-private randomness amplified $G$.
We fix a pair  $(\bar x, \bar y) \in \bar E$ satisfying \eqref{eq:profile-aug} and an arbitrary bit string $w$.

(a) Now we define a subgraph in $\bar G$ that consists of vertices that are \emph{mildly similar to $(\bar x, \bar y)$ when $w$ is known}. Formally, we let $\bar G_w=(\bar L_w, \bar R_w; \bar E_w)$ be the induced subgraph in $\bar G$ that contains the vertices 
\begin{equation}\label{partitionsGraph}
\begin{array}{rcl}
\bar L_w &=& \{\bar  x'\in \bar L\ :\ C(\bar x'|w) \lep C(\bar x|w) \}, \\
\bar R_w &=& \{\bar  y'\in \bar R\ :\ C(\bar y'|w) \lep C(\bar y|w) \},
\end{array}
\end{equation}
and the edges 
\[
\bar E_w = E \cap (\bar L_w \times \bar R_w).
\]

(b) In a similar way, we define a subgraph in $\bar G$ that consists of edges that are \emph{strongly similar to $(\bar x, \bar y)$ when $w$ is known}.
To this end, we let $\hat G_w=(\bar L_w, \bar R_w; \hat E_w)$  that consists of the same sets of vertices  the vertices $\bar L_w \subset \bar L$ and  $\bar R_w\subset \bar R$, and a more restricted set of edges $\hat E_w$ defined as 
\[
\hat E_w = \{ (\bar x', \bar y') \in  \bar E_w \text{ such that } C(\bar x' |\bar y', w) \le C(\bar x|\bar y, w) \text{ and }  C(\bar y'|\bar x', w) \le C(\bar y|\bar x, w)\}.
\]
\end{definition}
While discussing the complexity profile of $\bar x, \bar y, w$, we will use the following notation:
\[
\begin{array}{l}
\alpha:= I(w:\bar x|\bar y),\ \beta := I(w:\bar y|\bar x), \ \text{and}\  \gamma:= I(\bar x:\bar y:w),
\end{array}
\]
see the Venn diagram in Fig.~\ref{figure1}. Observe that
\[
\begin{array}{rcl}
C(\bar x|w) &\eqp& C(\bar x|w,\bar y) + I(\bar x:\bar y|w)  = 2n+m -\alpha-\beta, \\
C(\bar y|w) &\eqp& C(\bar y|w,\bar x) + I(\bar x:\bar y|w)  = (d+1)n+m -\beta - \gamma,\\
C(\bar x, \bar y|w) &\eqp& C(\bar x|w,\bar y) + C(\bar y|w,\bar x) + I(\bar x:\bar y|w)  = (d+2)n+2m -\alpha-\beta - \gamma.
\end{array}
\]
\begin{figure}[htbp]
\centering
				\begin{tikzpicture}[scale=0.75]
				  \draw \firstcircle node[above left] {\small $m + n-\alpha$};
				  \draw \secondcircle node [above right] {\small $m + dn-\gamma$};
				  \draw \thirdcircle node [below] {\small \ldots};
				  \node at (95:0.05)   {\small $\beta$};
				  \node at (90:1.55) {\small $n-\beta$};
				  \node at (210:1.75) {\small $\alpha$};
				  \node at (330:1.75) {\small $\gamma$};
                  
				  \node at (150:4.75) {\Large $\bar x$};
				  \node at (30:4.75) {\Large $\bar y$};
				  \node at (270:4.75) {\Large $w$};
				\end{tikzpicture}
\caption{
Complexity profile of $(\bar x,\bar y,w)$ such that 
\(C(\bar x|\bar y) = m + n,\ C(\bar y|\bar x) = m + dn,\ I(\bar x:\bar y) = n,\)
and 
\(I(\bar x:w|\bar y) = \alpha,\ I(\bar x:\bar y:w)= \beta,\ \text{and}\ I(\bar y:w|\bar x) =\gamma\).}
\label{figure1}
\end{figure}

\begin{remark}
In the diagram in Fig.~\ref{figure1} we do not specify the value $C(w|\bar x,\bar y)$, which cannot be found given the parameters $n,d,\alpha,\beta$. As the graph $\bar G$ is given explicitly (i.e., if we can find it algorithmically given the parameters), applying the technique of \cite[theorem~5]{muchnik-romashchenko}, one can show that for every $w$ there exists a $w'$ such that all information quantities shown in Fig.~\ref{figure1}  for $(\bar x,\bar y,w')$
remain the same (up to $O(\log N)$) for $(\bar x,\bar y,w)$ except for the undefined term $C(w|\bar x,\bar y)$, which  vanishes, i.e., $C(w'|\bar x,\bar y) \eqp 0$. However, we do not use this observation in our arguments.
\end{remark}

We have the two following lemmas:

\begin{lemma}\label{l:standard-upper-bound}
For the sets $\bar L_w$ and $\bar R_w$ from Definition~\ref{d:partitionsGraph} 
we have   
\[
|\bar L_w| \le 2^{C(\bar x|w)+1} \ \text{and}\ |\bar R_w| \le 2^{C(\bar y|w)+1}.
\]
\end{lemma}
\begin{proof}

In general, for every string $b$ and for every integer number $\ell$, 
the number of all strings $a$ such that $C(a|b) \le \ell$ is not greater than the number of binary descriptions $w$ of length  at most $\ell$, which is
\[
1+2+\ldots+2^\ell = 2^{\ell+1}.
\]
We apply this observation to count the strings $x'$ and $y'$ provided the upper bounds on  $C(x'|w)$ and $C(y'|w)$, and we are done.
\end{proof}
\begin{lemma}\label{l:induced-edges}
For the sets $\bar L_w$, $\bar R_w$, $\bar E_w$, and $\hat E_w$ from Definition~\ref{d:partitionsGraph}  we have
\[
|\bar L_w| \ge 2^{C(\bar x|w) - O(\log N)},\ |\bar R_w| \ge 2^{C(\bar y|w) - O(\log N)}, \text{ and } |\bar E_w| \ge |\hat E_w| \ge 2^{C(\bar x, \bar y|w) - O(\log N)}.
\]
\end{lemma}
\begin{proof}
This argument is also pretty standard but slightly subtler than the observation made above on the \emph{upper} bounds for the cardinalities of $\bar L_w$ and $\bar R_w$. 
Given the graph $G=(L, R; E)$ and the binary expansion of $m$, 
we can construct the graph $\bar G = (\bar L,\bar R ; \bar E)$.  (Observe that the total description of these parameters requires $O(\log N)$ bits of information). 

If we are given also a string $w$, and the parameters $C(\bar x|w)$ and $C(\bar y|w)$, we can take all programs of length  $C(\bar x|w)$ and of length  $C(\bar y|w)$, and run them in parallel on input $w$. As some of these programs stop, we reveal one by one the elements of $\bar L_w$ and $\bar R_w$ and, therefore, edges in $\bar E_w = (\bar L_w \times \bar R_w) \cap \bar E$. The original vertices $\bar x$, $\bar y$, and the edge  $(\bar x,\bar y)$ eventually appear in this enumeration process.

To enumerate $\hat E_w$, we need on top of everything to verify for each revealed edge $(\bar x', \bar y')\in \bar E_w$ whether 
\[
C(\bar x'|\bar y', w) \le C(\bar x|\bar y, w) \text{ and }  C(\bar y'|\bar x', w) \le C(\bar y|\bar x, w).
\]
This property is enumerable (i.e., we can reveal all \emph{positive} answers) if we know the binary expansions of the numbers $C(\bar x|\bar y, w)$ and $ C(\bar y|\bar x, w)$: we run all the programs of size smaller than $C(\bar x| \bar y, w)$ (respectively $ C(\bar y|\bar x, w)$) with input $(\bar y', w)$ (respectively $(\bar x', w)$) until one of them returns $\bar x'$ (respectively $\bar y'$).

To identify any specific vertex $\bar x'$ in $\bar L_w$ given $w$, we need to know the parameters required to launch the enumeration described above and the position (the index) of this vertex in our enumeration of $\bar L_w$. Such an index would consist of $\log |\bar L_w|$ bits. Since we know that $\bar x \in \bar L_w$, we have
\[
C(\bar x|w) \lep \log |\bar L_w|.
\] 
Reading this inequality ``from the right to the left,'' we obtain 
$
|\bar L_w| \ge 2^{C(\bar x|w) - O(\log N).}
$
A similar argument gives 
$
|\bar R_w| \ge 2^{C(\bar y|w) - O(\log N).}
$

In a similar way, to identify any specific edge in $\hat E_w$ given $w$, we need to specify the parameters used to run the enumeration process (a description of the graph $G$ and the binary expansions of the values $C(\bar x|w)$, $C(\bar y|w)$, $C(\bar x|\bar y, w)$,   $C(\bar y|\bar x, w)$) and provide the index of this edge in the enumeration of $\hat E_w$. This index consists of $\log |\hat E_w|$ bits.
Therefore, as $(\bar x,\bar y)\in \hat E_w$, we have
\[
C(\bar x,\bar y|w) \lep \log |\hat E_w|.
\]
It follows that 
$
|\hat E_w| \ge 2^{C(\bar x, \bar y|w) - O(\log N)},
$
and we are done. Finally, the bound $|\bar E_w| \ge |\hat E_w|$ is obvious.
\end{proof}

\section{From mixing properties to information inequalities}
\label{s:mixing}

In what follows, we will prove two theorems that both are statements about \emph{mutual information inextractability}. Roughly speaking, we are going to show that the mutual information shared by a typical edge $(x,y)$ of some graph cannot be ``materialized''. This property is due to the combinatorial nature of the graph.

Let us explain informally what it means to materialize (or \textit{extract}) the mutual information, following the intuition from the seminal papers \cite{gk1973,wyner}. 
For a pair of binary strings $(x,y)$, the string $z$ materializes the mutual information between $x$ and $y$ if 
\[
C(z) \eqp I(x:y) \text{ and } I(x:y|z) \eqp 0.
\]
In the classical example of $x,y$ such that materialization of the mutual information is possible, we take  three independent random strings $a, b$ and $c$ and set $x := ab$ (the concatenation of $a$ and $b$) and $y := bc$. Then the mutual information between $x$ and $y$ is \emph{materialized} by $z = b$. 
In more involved cases, $x$ and $y$ may share large mutual information, though we cannot materialize it (even if logarithmic-precision equalities are substituted with substantially rougher approximations).

The example above has the following combinatorial insight. We can consider a bipartite graph where vertices in both sides are binary strings of length $2n$, and the edges are 
the pairs $(ab, bc)$ for all triplet of $n$-bit strings $(a,b,c)$.
It is easy to see that this graph consists of $2^n$ connected components corresponding to different strings $b$. Such a graph has very weak mixing properties, since a random walk started in one connected component can never jump to another one. Thus, the combinatorial properties of this graph are very different from those exhibited by the graph in Example~\ref{e:poly}. The property of well-mixing of the graph in Example~\ref{e:poly} will help us to prove that mutual information in a typical edge of such a graph cannot be materialized. Both statements of Theorems~\ref{th:1} and~\ref{th:2} imply that we cannot have $I(x:y|z) \eqp 0$ unless $C(z)$ is much larger than $I(x:y)$.

The central idea of both proofs can be explained as follows. In the underlying graph, any induced subgraph cannot be ``very dense'' (i.e., a large fraction of vertex pairs cannot be connected by edges) unless it is ``very small'' (i.e., the total number of vertices must be small, at least in one part of the bipartition).  Higher density of edges in a subgraph corresponds to a smaller value of $ I(x : y | z)$, while obtaining a ``small'' subgraph corresponds to making small $C(x | z) $ or $C(y | z)$;   a tighter trade-off between the combinatorial parameters of the subgraph translates into corresponding information-theoretic inequalities. Although the intuition behind the  proofs of Theorems~\ref{th:1} and~\ref{th:2} is similar, the technical implementations are different. Moreover, as we will see later, these arguments are incomparable (neither one reduces to the other).

\subsection{Mixing as a bounded number of common neighbors}

\begin{theorem}\label{th:1} 
Let $G = (L,R;E)$ be the bipartite graph from Proposition~\ref{p:standard-profile} with $d$-bounded common neighborhood on the right;
let $\bar G = (\bar L,\bar R,\bar E)$ be the $m$-private randomness amplified $G$ (see Definition~\ref{d:privateRandomGraph}), and
let $(\bar x,\bar y) = (\langle x, r_x \rangle, \langle y, r_y \rangle) \in \bar E$ be a typical edge of this graph. Then for every bit string $w$ we have
\[
C(y|x) \lep I(w: \bar y  | \bar x)
\text{ or }
C(x|y) \lep I(\bar x: \bar y|w)  + I(w: \bar x | \bar y)+ \log d.
\]

\end{theorem}

\begin{remark}
Theorem~\ref{th:1} applies to the graph from Example~\ref{e:poly} constructed for the field $\mathbb{F}_{2^n}$ of cardinality $2^n$. Our argument follows the ideas of the proof of non-extractability of the mutual information for a pair that consists of a line on a discrete plane and a point on this line proposed by An.~Muchnik, see  \cite{muchnik-common-info,cmrsv}. 
\end{remark}

\begin{proof}
To make the argument easier to follow, we will use the notations from Figure~\ref{figure1} for $(\bar x, \bar y, w)$. For every $w$, a straightforward computation gives   
\[
C(\bar x|w)  \eqp m+2n-\alpha-\beta, \text{ }
I(\bar x:\bar y|w) \eqp n - \beta, \text{ }
C(\bar y | w) \eqp m+(d+1)n-\beta-\gamma
\]
and 
\[
C(\bar x,\bar y|w)\eqp 2m+(d+2)n - \alpha -\beta - \gamma.
\]
Let $\hat G_w=(\bar L_w, \bar R_w,\hat E_w)$ be the  subgraph from Definition~\ref{d:partitionsGraph} (b). Using Lemma~\ref{l:standard-upper-bound}, we get that\\
\[
|\bar L_w| \le 2^{m+2n - \alpha - \beta+O(\log N)}  \text{ and } |\bar R_w| \le 2^{m+(d+1)n -\beta - \gamma+O(\log N)}.
\]

\smallskip

\noindent
\emph{Observation 1: on the maximal degree in $\hat G_w$.} $\triangleright$
By definition of $\hat G_w$, for every $\bar x' \in \bar L_w$ and for every $\bar y'\in \bar R$ we have
\begin{equation}
\label{eq:sec-3.1-1}
C(\bar x'|\bar y', w) \le \ell,
\end{equation}
where $\ell = C(\bar x|\bar y, w)$. For fixed $\bar y$ and $w$, there are less than $2^{\ell+1}$ different $\bar x'$ verifying~\eqref{eq:sec-3.1-1}. 
Therefore, every vertex in  $\bar R_w$ in graph $\hat G_w$ is incident to less than
\[
2^{C(\bar x|\bar y, w)+1} 
\le 
2^{m+n-\alpha + O(\log N)}
\]
edges in $\hat E_w$. $\triangleleft$

\smallskip

\noindent
\emph{Observation 2: on the cluster structure of $\bar G$.} $\triangleright$
In the graph $\bar G$, every vertex in $\bar L$, $(x , r_x)$ belongs to a \emph{cluster} of $2^m$ vertices of the form $(x , r'_x)$ (same point $x$ but any string $r'_x$) that all have exactly the same neighbors. The same is true for vertices in $\bar R$. Such a cluster of size $2^m$ vertices in $\bar L$ (or, respectively, in $\bar R$) corresponds to a single point $x$ in $L$ (respectively one polynomial $y$ in $R$), and if $(x,y)$ is an edge in the the original graph $G$, then the two associated clusters in $\bar G$ form a complete bipartite graph $ K_{2^m, 2^m}$. $\triangleleft$

\smallskip

In what follows we will select in $\bar R_w$ a subset of $k$ vertices $\bar y_1,\ldots, \bar y_k$ so that for all $\bar y_i = (y_i, r_i)$ the components $y_i$ are pairwise different, i.e., we take all vertices from a different clusters. 
In this case, the number of common neighbors for $(y,r)$ and $(y',r')$ (with $y \neq y'$ and any $r, r'$) is at most $d\cdot 2^m$. Indeed, in the original graph $G$ there are at most $d$ different $x \in L$ incident at once to $y$ and $y'$, and, for each such $x$, there are $2^m$ components $r_x$ so that $(x,r_x) \in \bar L$ is connected by an edge at once with $(y,r)$ and $(y',r')$. For our purposes, we  want these $k$ vertices to have maximal possible degrees in $\hat G_w$ (so that the \emph{average} degree of the $k$ selected vertices is also at least $2^{m+n-\alpha - O(\log N)}$). 

\smallskip

\noindent
\emph{Observation 3: on average degree in $\hat G_w$} $\triangleright$
On the one hand, each vertex in $\bar R_w$ is incident to at most 
\[
2^{m+n-\alpha + O(\log N)}
\]
edges in $\hat E_w$, see Observation~1 above. Besides, we know that vertices in $\bar R$ are grouped in clusters of size $2^m$ whose vertices all have the same neighbors, see Observation~2 above. Hence the neighbors of a clusters are the same as the neighbors of only one vertex in the cluster. It follows that for every cluster in $\bar R_w$ the total number of incident edges in $\hat G_w$ is at most $2^m$ times the maximal degree, hence 
\[
2^{2m+n-\alpha + O(\log N)}
\]
edges. On the other hand, Lemma~\ref{l:induced-edges} guarantees that
\[
|\hat E_w| \ge 2^{2m+(d+2)n - \alpha - \beta - \gamma - O(\log N)}.
\]
Therefore, we can find at least                                

\[
\frac{|\hat E_w|}{2^{2m+n-\alpha + O(\log N)}} \ge 2^{(d+1)n -\beta - \gamma -O(\log N)}
\]
vertices  $\bar y_1,\ldots, \bar y_k\in \bar R_w$ such that 
\begin{enumerate}
\item[(i)] all these vertices belong to pairwise different clusters,
\item[(ii)] the average degree among  $\bar y_1,\ldots, \bar y_k$ is at least $2^{m+n-\alpha - O(\log N)}$.
\end{enumerate}
To this end, we  select at first $k \le 2^{(d+1)n -\beta - \gamma - O(\log N)}$ clusters with maximal possible average degree, and then pick in each of the selected clusters a vertex $\bar y_i$ with maximal possible degree.
(A more precise choice of $k$ is discussed later.)
$\triangleleft$

\smallskip

Now, having selected vertices  $\bar y_1, \dots, \bar y_k \in \bar L_w$ with average degree $\ge 2^{m+n-\alpha - O(\log N)}$,
we count all neighbors of these vertices on the left side of the graph. We denote $\Gamma(v)$ be the set of neighbors of the vertex $v$ in $\hat G_w$. Then
\[
|\Gamma(\bar y_1) \cup \ldots \cup \Gamma(\bar y_k) | \le |\bar L_w| \le 2^{m+2n-\alpha-\beta + O(\log N)} 
\]
(the last inequality follows from Lemma~\ref{l:standard-upper-bound}).

All of the selected vertices from $\bar R_w$ belong to different clusters and, therefore, have $(d\cdot 2^m)$-bounded common neighborhood.
Hence, for every $1\le i < j \le k$, we have $|\Gamma(\bar y_i) \cap \Gamma(\bar y_j)| \le d \cdot 2^m$.
Using the inclusion-exclusion principle, we obtain
\[
|\Gamma(\bar y_1) \cup \ldots \cup \Gamma(\bar y_k)|
\ge \sum_{i=1}^k |\Gamma(\bar y_i)| - \sum_{1\le i<j\le k} |\Gamma(\bar y_i) \cap \Gamma(\bar y_j)|
\]
\[
\ge k\cdot 2^{m+n-\alpha - O(\log N)} - {k \choose 2} d \cdot 2^m.
\]
Thus, 
\[
 k\cdot 2^{m+n-\alpha - O(\log N)} - O(d \cdot 2^m k^2) \le  2^{m+2n-\alpha-\beta + O(\log N)}.
\]
which can be rewritten as 
\begin{equation}
\label{eq:iep}
 k\cdot 2^{m+n-\alpha - O(\log N)}  \le  O(d \cdot 2^m k^2) + 2^{m+2n-\alpha-\beta + O(\log N)}
\end{equation}

We are looking now for a number $k$ that contradicts this inequality. We denote $r = \log k$, $s = \log d$.
To get the contradiction, both terms in the right-hand side  \eqref{eq:iep} need  to be strictly dominated by the term in the left-hand side of this inequality. 
This happens when both inequalities
\[
\left[
\begin{array}{rcl}
r + m + n - \alpha & \lep& s + 2r + m, \\ 
r + m + n - \alpha &\lep& m+2n-\alpha-\beta
\end{array}
\right.
\]
are false. To this end, we need to find $r$ such that the following two inequalities hold:
  \[
 \left\{
 \begin{array}{rcl}
 n - \alpha -s & \gg&  r , \\ 
  r    &\gg& n -\beta.
 \end{array}
 \right.
 \]
 In other words, we get a contradiction if we can find $r$ verifying 
 \begin{equation}
 \label{eq:cont}
 n -\beta \ll r \ll  n - \alpha -s.
 \end{equation}
 What can prevent us from taking such an $r$ and getting a contradiction? There are only two reasons why this might be impossible:
 \begin{itemize}
 \item{} [Reason 1]:  there is no positive gap between the right-hand side and the left-hand side in \eqref{eq:cont}, i.e.,
 \[
 n - \alpha -s \lep  n -\beta,
 \]
 which is equivalent to 
 \(
n  \lep   n -\beta + \alpha + s  .
 \)
 \item{} [Reason 2]:  the left-hand side of  \eqref{eq:cont} is 
 too large, and we are unable to find a sufficient number of points $\bar y_j$ in $\bar R$ that belong to pairwise different clusters, i.e., 
 \[
 k = 2^r \ge 2^{ n -\beta } \ge | R| = 2^{(d+1)n -\beta -\gamma + O(\log N)}.
 \] 
This condition implies 
 \(
 dn \lep  \gamma  . 
 \)
 \end{itemize}
Therefore, we avoid a contradiction only if one of these two conditions is verified:
\[
n \lep n - \beta + \alpha + s
\]
(which means that $n \eqp C(x|y)  \lep n - \beta + \alpha + s \eqp I(\bar x: \bar y|w)  + I(\bar x : w | \bar y)+ \log d$), 
or 
\[
dn \lep \gamma
\]
(which means that  $C(y|x) \lep I(\bar y : w | \bar x)$).
This concludes the proof of the theorem.
\end{proof}

\subsection{Mixing property as a large spectral gap}

In this section we prove a statement that is slightly more general than  \cite[Theorem~5]{geoffroy}.

\begin{theorem}\label{th:2}
Let $G = (L,R;E)$ be a bipartite graph with parameters as in Proposition~\ref{p:standard-profile} that is 
a good spectral expander. Let $\bar G = (\bar L, \bar R; \bar E)$ be an $m$-private randomness amplified $G$, 
and let $(\bar x,\bar y) = (\langle x, r_x \rangle, \langle y, r_y \rangle)$ be a typical edge of $\bar G$. Then, for any $w$ such that $I(w:\bar y|\bar x) \eqp 0$ we have 
\[
I(w: \bar x :\bar y) \lep 0 \text{ or } I(w:\bar x|\bar y) + I(w: \bar y | \bar x) \gep n.
\]
\end{theorem}
\begin{remark}
Theorem~\ref{th:2} applies to the graph from Example~\ref{e:poly}. 
\end{remark}
\begin{proof}

Similarly to the proof of Theorem~\ref{th:1},
we define $\bar G_w=(\bar L_w, \bar R_w; \bar E_w)$ as the induced subgraph in $\bar G$ that contains edges that look like $(\bar x, \bar y)$ when is $w$ known (Definition~\ref{d:partitionsGraph}). We use again the notation in Figure~\ref{figure1}. 
From Lemma~\ref{l:standard-upper-bound} we obtain 
\[
|\bar L_w| \le 2^{m+2n-\alpha-\beta \pm O(\log N)} \text{ and } |\bar R_w| \le 2^{m+(d+1)n-\beta  - \gamma \pm O(\log N)} ,
\]
and from Lemma~\ref{l:induced-edges} we obtain 
$|E_w| \ge 2^{2m+(d+2)n-\alpha-\beta  - \gamma \pm O(\log N)}$.

\begin{lemma}[see \cite{mixing-lemma} for the proof]\label{l:expander-mixing-lemma} 

If $\lambda_2$ is the 2nd eigenvalue of a bipartite graph $G=(L,R;E)$, then
\[
\frac{|E_w|}{|E|} \le \frac{ |L_w| \cdot |R_w|}{|L|\cdot |R|} + \lambda_2 \frac{ \sqrt{|L_w| \cdot |R_w|}}{|E|}.
\]
\end{lemma}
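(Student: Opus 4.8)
The plan is to prove this as the standard bipartite expander mixing lemma, via the singular value decomposition of the biadjacency matrix. Let $B$ be the $|L|\times|R|$ matrix with $B_{u,v}=1$ iff $(u,v)\in E$, and let $\mathbf{1}_{L_w}\in\mathbb{R}^{L}$ and $\mathbf{1}_{R_w}\in\mathbb{R}^{R}$ be the indicator vectors of $L_w$ and $R_w$. Then $|E_w| = \mathbf{1}_{L_w}^{\mathsf T}\,B\,\mathbf{1}_{R_w}$, so the whole lemma reduces to estimating this bilinear form.

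First I would collect the spectral facts, all coming from bi-regularity. Since each vertex of $L$ has degree $D_L$ and each vertex of $R$ has degree $D_R$, we have $B\mathbf{1}_R = D_L\mathbf{1}_L$ and $B^{\mathsf T}\mathbf{1}_L = D_R\mathbf{1}_R$, hence $BB^{\mathsf T}\mathbf{1}_L = D_LD_R\,\mathbf{1}_L$. Therefore the normalized all-ones vectors $u_1 := \mathbf{1}_L/\sqrt{|L|}$ and $w_1 := \mathbf{1}_R/\sqrt{|R|}$ form the top pair of singular vectors of $B$, with singular value $\sigma_1 = \sqrt{D_LD_R}$. Moreover the nonzero eigenvalues of the symmetric adjacency matrix of $G$ (the $2\times 2$ block matrix with off-diagonal blocks $B$ and $B^{\mathsf T}$) are exactly $\pm\sigma_1,\pm\sigma_2,\ldots$, so the second singular value of $B$ equals $\lambda_2$.

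Next I would carry out the orthogonal decomposition. Write $\mathbf{1}_{L_w} = \tfrac{|L_w|}{|L|}\mathbf{1}_L + f$ and $\mathbf{1}_{R_w} = \tfrac{|R_w|}{|R|}\mathbf{1}_R + g$ with $f\perp\mathbf{1}_L$ and $g\perp\mathbf{1}_R$. Expanding $\mathbf{1}_{L_w}^{\mathsf T}B\,\mathbf{1}_{R_w}$: the two cross terms vanish because $\mathbf{1}_L^{\mathsf T}B = D_R\mathbf{1}_R^{\mathsf T}$ is orthogonal to $g$ and $B\mathbf{1}_R = D_L\mathbf{1}_L$ is orthogonal to $f$, while $\mathbf{1}_L^{\mathsf T}B\,\mathbf{1}_R = |E|$, giving
\[
|E_w| = \frac{|L_w|\,|R_w|}{|L|\,|R|}\,|E| + f^{\mathsf T}B\,g .
\]
Since $f\perp u_1$ and $g\perp w_1$, the $\sigma_1$-component of $B$ contributes nothing to $f^{\mathsf T}Bg$, so $|f^{\mathsf T}Bg|\le \sigma_2\,\|f\|\,\|g\| = \lambda_2\,\|f\|\,\|g\|$. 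Finally $\|f\|^2 = |L_w|\bigl(1-|L_w|/|L|\bigr)\le |L_w|$ and likewise $\|g\|^2\le |R_w|$, so $|E_w|\le \tfrac{|L_w||R_w|}{|L||R|}|E| + \lambda_2\sqrt{|L_w|\,|R_w|}$. Dividing by $|E|$ and using the identity $|E|^2 = (D_L|L|)(D_R|R|) = D_LD_R\,|L|\,|R|$ rewrites $\lambda_2\sqrt{|L_w||R_w|}/|E|$ as $\tfrac{\lambda_2}{\sqrt{D_LD_R}}\sqrt{\tfrac{|L_w||R_w|}{|L||R|}}$, which is the claimed inequality (in fact a two-sided estimate, of which only the upper bound is needed here).

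The one genuinely delicate point is the spectral bookkeeping in the second step: verifying that bi-regularity makes the all-ones vectors the top singular vectors of $B$, and hence that the second singular value is exactly $\lambda_2$, so that the deviation term is controlled by $\lambda_2$ and not by $\sigma_1=\sqrt{D_LD_R}$. Everything after that — the orthogonal splitting, the vanishing of the cross terms, the Cauchy–Schwarz step, and the normalizing identity $|E|^2 = D_LD_R|L||R|$ — is routine.
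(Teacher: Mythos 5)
Your proof is correct. The paper itself does not prove Lemma~\ref{l:expander-mixing-lemma}; it only cites \cite{mixing-lemma}, and your argument is the standard proof of the bipartite expander mixing lemma via the singular value decomposition of the biadjacency matrix: the identification $\sigma_1=\sqrt{D_LD_R}$ with top singular vectors $\mathbf{1}_L/\sqrt{|L|}$, $\mathbf{1}_R/\sqrt{|R|}$, the orthogonal splitting with vanishing cross terms, the bound $|f^{\mathsf T}Bg|\le\sigma_2\|f\|\|g\|$ with $\sigma_2=\lambda_2$, and the normalization $|E|^2=D_LD_R|L||R|$ are all handled correctly, so your write-up supplies a complete self-contained proof of exactly the inequality used in the paper.
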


For the original graph $G$ we have $\lambda_2 \le \sqrt{2^{dn}}$ (see Lemma~\ref{p:spectrum} and its proof in the appendix section). The eigenvalues of the complete bipartite graph are the numbers $\pm 2^m$ and $0$ (with a high multiplicity). The eigenvalues of a tensor product of two graphs are  pairwise products of the eigenvalues of both tensor components. Therefore, the second eigenvalues of $\bar G$ is $O\left( 2^m \cdot \sqrt{2^{dn}} \right)$.

In our setting, Lemma~\ref{l:expander-mixing-lemma} rewrites to 
\begin{equation}
\label{eq:mixing-lemma}
\begin{array}{rcl}
|\bar E_w| &\le& \frac{|\bar L_p| \cdot |\bar R_p|}{2^{n}} + 2^{m + \frac{dn}{2}} \cdot \sqrt{|\bar L_p| \cdot |\bar R_p|}
\end{array}
\end{equation}
On the right side of \eqref{eq:mixing-lemma} we have a sum of two terms. 
We analyze separately two cases: when the first of these terms is greater than the second one, and when the second term is greater than the first one.

\emph{Case 1:} Assume that in the right-hand side of \eqref{eq:mixing-lemma} the first term dominates. Then  \eqref{eq:mixing-lemma} rewrites to
\[
|\bar E_w| \le O\left(\frac{ |\bar L_w| \cdot |\bar R_w|}{2^{n}} \right).
\]
Combining this with Lemma~\ref{l:standard-upper-bound} we obtain
\[
2m+(d+2)n -\alpha - \beta  - \gamma \lep m + 2n -\alpha - \beta + m + (d+1)n - \beta - \gamma  - n,
\]
which implies that $\beta \lep 0$. This means that the mutual information between $\bar y$ and $w$ is negligibly small. Thus, we get the first clause of the alternative in the conclusion of the theorem.

\smallskip

\emph{Case 2:} Assume that in the right-hand side of  \eqref{eq:mixing-lemma} the second term dominates.
Then  \eqref{eq:mixing-lemma} rewrites to 
\[
|\bar E_w| \le O\left( 2^{m + \frac{dn}{2}} \sqrt{ |\bar L_w| \cdot |\bar R_w| }  \right),
\]
which gives
\[
2m + (d+2)n -\alpha - \beta  - \gamma\lep m + \frac{dn + m + 2n-\alpha-\beta + m + (d+1)n - \beta - \gamma}{2}
\]
and results in $n \lep \alpha  + \gamma$. So we come to the second clause of the alternative in the conclusion of the theorem.
\end{proof}

The following corollary is a particular case of Theorem~\ref{th:2} that will be used in one of our applications (see Proposition~\ref{prop:4}). It corresponds to the setting without private randomness, and basically says that any string $w$ giving more than $C(x|y)$ bits of information on $x$, together with $y$ is enough to compute $x$. This will be useful to show that, in some cases, every message will learn something new to the receiver (this is not what we want from a ciphered message).

\begin{corollary}
\label{cor:1}
If we let $m = 0$ (hence $x = \bar x$, $y = \bar y$) and add to the conditions of Theorem~\ref{th:2} that $I(w:x) \gep n$ and $I(w:y |x) \eqp 0$, then we have $C(x|y,w) \eqp 0$ (i.e. all the missing information from $x$ is revealed to the holder of $y$ via $w$).
\end{corollary}
\begin{proof}
Due to Theorem~\ref{th:2}, we need to consider two  cases: $I(w:x:y) \lep 0$ and $I(w:x|y) + I(w:y|x) \gep n$.

\noindent \emph{Case 1}: $I(w:x:y) \lep 0$.  We combine this inequality with the condition $I(w:x) \gep n$ and obtain 
\[
I(w:x|y) \eqp I(w:x) - I(w:x:y) \gep I(w:x) 
\gep n.
\] 
One the other hand, 
\begin{equation}
\label{eq:I(w:x|y)=n}
C(x|w,y)  \eqp C(x|y) - I(w:x|y) \eqp n -  I(w:x|y) \eqp 0.
\end{equation}

\medskip
\noindent \emph{Case 2}: $I(w:x|y) + I(w:y|x) \gep n$. Since we assume that $I(w:y |x) \eqp 0$, we get $I(w:x|y) \eqp n$. Therefore, we can apply again \eqref{eq:I(w:x|y)=n}.

\end{proof}

\begin{remark}
\label{rem:relativization}
The proofs of Theorem~\ref{th:1} and Theorem~\ref{th:2} relativizes. In fact, all of our statements remain valid if we relativize all terms of Kolmogorov complexity conditional on a string of public randomness $r$. This is true if for every string $s$ used in a proof, we have $I(r:s)\eqp 0$. This happens with overwhelming probability if $r$ is sampled at random independently from any string $s$ used in the proofs.
We present our proofs and statements without $r$. But every step of the arguments trivially relativizes conditional on $r$ assuming that $C(s|r) \eqp C(s)$. We only need to add routinely the random bit string $r$ to the condition of all terms with Kolmogorov complexity appearing in the proof. Such a public randomness may be used in the communication protocols to produce random hash values (for instance, a linear hash function) of the inputs to send only one part of it.

\end{remark}

\section{Linear information inequalities  for graphs with  mixing properties} 
\label{s:inequalities}

The statements and the schemes of the proof of Theorem~\ref{th:1} and Theorem~\ref{th:2} are  similar. One can ask how the  techniques based on bounded size common neighborhood and on a spectral bound relate with each other. It turns out that none of these two methods is superior to the other. These techniques can give incomparable implications  even when being applied to one and the same graph. We illustrate this observation in the next proposition.

\begin{theorem}
\label{th:rustam}
Let $G=(L,R;E)$ be a bipartite graph from Proposition~\ref{p:standard-profile} for $d=1$ 
($L$ consists of points on the affine plane and $R$ consists of  polynomials of degree at most one, which 
represent non-vertical straight lines on the affine plane). Let $(x,y)$ be a typical edge of this graph (so we have $C(x)\eqp C(y)\eqp 2n$ and $I(x:y)\eqp n$).

\noindent
(i) If $G$ has $1$-bounded common neighborhood on the right, then for these $x,y$ and for all $w$ we have
\[
I(w:x,y) \lep 2I(w:x|y) + 2I(w:y|x).
\]
Moreover, for all $m$-bit strings $r_x$ and $r_y$ such that 
\[
C(x,y,r_x,r_y)\eqp C(x,y) + C(r_x)+C(r_y)
\] 
we have
\begin{equation}\label{eq:comb}
I(w:x,y,r_x,r_y) \lep 2I(w:x,r_x|y,r_y) + 2I(w:y,r_y|x,r_x).
\end{equation}

\noindent
(ii) If $G$ is a bipartite spectral expander, then for these $x,y$ and for all $w$ such that $C(w)\le n$ we have
\[
I(w:x,y) \lep I(w:x|y) + I(w:y|x).
\]
Moreover, for such a $w$ and  for all $m$-bit strings  $r_x$ and $r_y$ such that 
\[
C(x,y,r_x,r_y)\eqp C(x,y) + C(r_x)+C(r_y)
\] 
we have
\begin{equation}\label{eq:spec}
I(w:x,y,r_x,r_y) \lep I(w:x,r_x|y,r_y) + I(w:y,r_y|x,r_x).   
\end{equation}
\end{theorem}
The claims (i) and (ii) of the theorem are incomparable: 
the second one gives a sharper inequality but the first one applies to all $w$ without restrictions. Both claims apply to the graph  from Example~\ref{e:poly} with $d=1$.

\begin{remark}
Claim~(i) of Theorem~\ref{th:rustam} without \emph{moreover} part is a slightly stronger version of \cite[theorem~4]{cmrsv} saying (for the same pair $(x,y)$) that $C(w)\lep 2C(w|x)+2C(w|y)$ for all $w$.
\end{remark}

\begin{proof}
The proof of Theorem~\ref{th:rustam} essentially uses the statement of Theorem~\ref{th:1} and Theorem~\ref{th:2}. We will prove directly the stronger \emph{moreover} claims of the proposition. 
Let $\bar G=(\bar L,\bar R,\bar E)$ be an $m$-private randomness amplified $G$,
see  Definition~\ref{d:privateRandomGraph}.

\medskip
We begin with a proof of Claim (i). First of all, we rewrite  \eqref{eq:comb} as
\[
I(w:\bar x,\bar y) \lep 2I(w:\bar x|\bar y) + 2I(w:\bar y|\bar x).
\]
We apply Theorem~\ref{th:1} to $\bar x, \bar y$ and $w$ and obtain
\[
C(y|x) \lep I(\bar y:w|\bar x) \text{ or } C(x|y) \lep I(\bar x:\bar y|w) + I(\bar x:w|\bar y),
\]
Now we need consider two cases.

\emph{Case~1.} Suppose first that 
\[
 C(y|x)\lep  I(\bar y : w|\bar x),
\]
which means here that $I(\bar y:w|\bar x)\gep n$. Since $I(w:\bar x:\bar y) \lep I(\bar x:\bar y) \eqp n$, we have
\[
\begin{array}{rcl}
I(w:\bar x,\bar y)  \eqp I(w:\bar x|\bar y) + I(w:\bar y|\bar x) + I(w:\bar x:\bar y)& \lep& I(w:\bar x|\bar y) + I(w:\bar y|\bar x) + n\\
&& I(w:\bar x|\bar y) + 2 I(w:\bar y|\bar x), 
\end{array}
\]
and \eqref{eq:comb} follows immediately.
\medskip

\emph{Case~2.} Alternatively, suppose that we have 
\[
n \eqp C(x|y) \lep I(\bar x:\bar y|w) + I(w:\bar x|\bar y),
\]
Again, we observe that  $I(w:\bar x,\bar y) \eqp I(w:\bar x|\bar y) + I(w:\bar y|\bar x) + I(w:\bar x:\bar y)$. The sum of these inequalities gives 
\[
\begin{array}{rcl}
I(w:\bar x,\bar y) + n &\lep& I(w:\bar x|\bar y) + I(w:\bar y|\bar x) +  I(\bar x:\bar y:w)   \\
&& {} + I(\bar x: \bar y|w ) +  I(w:\bar x|\bar y)
\end{array}
\]
Using that $I(w:\bar x:\bar y) + I(\bar x:\bar y|w) = I(\bar x:\bar y) = n$, we obtain
\[
I(w:\bar x,\bar y) \lep 2I(w:\bar x|\bar y) + I(w:\bar y|\bar x),
\]
which implies \eqref{eq:comb}.

\medskip

Now we prove Claim~(ii) of the theorem. 
With our notation from above, Inequality \eqref{eq:spec} rewrites to 
\[
I(w: \bar x,\bar y) \lep I(w: \bar x| \bar y) + I(w: \bar y| \bar x).
\]
From Theorem~\ref{th:2} it follows 
\[
I(w:\bar x :\bar y) \lep 0 \text{ or } I(w:\bar x|\bar y) + I(w: \bar y | \bar x) \gep n.
\]

\emph{Case~1.} Assume that $I(w:\bar x :\bar y) \lep 0$. Then

\[
I(w: \bar x,\bar y) \eqp I(w: \bar x| \bar y) + I(w: \bar y| \bar x) + I(w:\bar x :\bar y)
\lep I(w: \bar x| \bar y) + I(w: \bar y| \bar x) ,
\]
and we obtain exactly \eqref{eq:spec}.

\medskip

\emph{Case~2.} Now we assume that $I(w:\bar x|\bar y) + I(w: \bar y|\bar x) \gep n$. 
We keep in mind that for all strings 
\[
 I(w: \bar x| \bar y) \lep C(w)
 \]
and use the assumption
$
C(w) \lep n.
$
Combining all these inequalities we obtain \eqref{eq:spec}.

\end{proof}

\section{Application I : Communication complexity for secret key agreement}

\label{s:app1}

We assume that Alice and Bob get inputs $x$ and $y$ respectively and agree on a secret key $z$, following some fixed communication protocol (possibly with public and private sources of randomness accessible to the Eavesdropper), see \cite{andrei-marius,emirhan,geoffroy} for a detailed discussion of the secret key agreement in the framework of Kolmogorov complexity.

First of all, we recall a lemma saying that for every communication protocol the \emph{external communication complexity} (information learned on the inputs of Alice and Bob by an external observer who can access the transcript of the protocol) is greater than the \emph{internal information complexity} (the amount of information learned during the communication by Alice on Bob's input plus the amount of information learned by Bob on Alice's input):

\begin{lemma}[see \cite{andrei-marius}]\label{l:triple-info}
If $t$ is a transcript of a communication protocol on inputs $x$ (given to Alice) and $y$ (given to Bob), then 
\[
I(t:x:y) \eqp  I(t:x,y) - ( I(t:x|y) + I(t:y|x) ) \gep 0.
\]
\end{lemma}

\noindent A version of this lemma for Shannon's entropy is well known in communication complexity, see, e.g., \cite{info-complexity}.

In what follows we explain the proof for communication protocols with private randomness and without public randomness. The argument easily adapts to the setting with a public source of randomness, see Remark~\ref{rem:relativization}.

\begin{theorem}\label{th:secret_key_agreement}
Let $G = (L, R; E)$ a bipartite graph that has bounded common neighborhood (we will take the graph from Example~\ref{e:poly}) and $\bar G = (\bar L, \bar R; \bar E)$ be
the $m$-private randomness amplified $G$. 
Let $(\bar x, \bar y) = (\langle x, r_x \rangle, \langle y, r_y \rangle)$ be a typical edge of $\bar G$. 

Suppose that Alice is given $\bar x$ and Bob is given $\bar y$
and they want to agree on a secret key $z$ of maximal complexity $I(\bar x:\bar y)$. Then, achieving such a secret key agreement protocol implies one of the two following situations:
\begin{itemize}
    \item Alice needs to send $C(x|y)$ bits and Bob nothing
    \item Bob needs to send $C(y|x)$ bits and Alice nothing.
\end{itemize} 
\end{theorem}

\begin{proof}

Consider $x$ to be a point in the finite affine plane of size $2^n \times 2^n$ and $y$ to be a polynomial of degree $d = O(n)$ that goes through $x$. Adding the private randomness $r_x$ and $r_y$, we get that $(\bar x, \bar y)$ is a typical edge of the graph considered in Example~\ref{e:poly} with private randomness (Definition~\ref{d:privateRandomGraph}).

Suppose that Alice and Bob agreed on a secret key $z$ of size $I(\bar x:\bar y)$ using a transcript $t$ that is the concatenation of all the messages of both participants that they sent to each other to achieve the protocol. We have then the following relations:

\[
\begin{array}{rccl}
I(z:t)&\eqp& 0 &\text{(information theoretic security, the transcript reveals no information on the key)}\\
C(z|t,\bar x) &\eqp& 0 &\text{(Alice knows the key as she knows $\bar x$ and the transcript $t$)}\\
C(z|t,\bar y) &\eqp& 0 &\text{(Bob knows the key  as he knows $\bar y$ and the transcript $t$)}\\
I(\bar x: \bar y: t ) &\gep& 0 &\text{(From Lemma~\ref{l:triple-info})}
\end{array}
\]
For our purposes, we study the quantity
\[
I(\bar x: \bar y|z,t) \eqp C(\bar x,z,t) + C(\bar y,z,t) - C(\bar x,\bar y,z,t) - C(z,t).
\]
Since $I(z:t) \eqp 0$ we have $C(z,t) \eqp C(z) + C(t)$. Moreover, $z$ can be computed from $(t,\bar y)$ or $(t,\bar x)$, thus $C(\bar x,z,t) \eqp C(\bar x,t)$ and $C(\bar y,z,t) \eqp C(\bar y,t)$, which gives

\begin{equation}
\label{eq:sec5-1}
I(\bar x: \bar y|z,t) \eqp C(\bar x,t) + C(\bar y,t) - C(\bar x,\bar y,t) - C(t) - C(z) \eqp I(\bar x : \bar y |t) - C(z).
\end{equation}
Since we have  $I(\bar x:\bar y:t) \gep 0$, the value $I(\bar x : \bar y |t)$ cannot exceed $I(\bar x : \bar y) \eqp C(z)$.
We combine this observation with \eqref{eq:sec5-1} and obtain $I(\bar x: \bar y|z,t) \lep 0$. 
On the other hand,  $I(\bar x: \bar y|z,t)$ cannot be negative. So we get that 
\begin{equation}
\label{eq:sec5-2}
I(\bar x: \bar y|z,t) \eqp 0.
\end{equation}

We now can apply the Theorem~\ref{th:1} to $(\bar x, \bar y)$ and $w = (t, z)$. The theorem claims that 
\[
C(y|x) \lep I(w:\bar y  | \bar x)
\text{ or }
C(x|y) \lep I(\bar x: \bar y|w)  + I(\bar x : w | \bar y)+ \log d.
\]

\emph{Case 1:} $I(\bar y : z,t | \bar x) \gep C(y|x)$. Then  we have
\[
dn \eqp C(y|x) \lep I(\bar y : z,t | \bar x) \lep I(\bar y : t | \bar x) + I(\bar y : z | \bar x,t)  \lep I(\bar y : t | \bar x) + C(z | \bar x,t)  \eqp I(\bar y : t | \bar x).
\]
Thus, we obtain  $I(\bar y : t | \bar x) \gep dn$. This means that  the information  revealed by  Bob  on his input is at least $dn$ bits. 

\emph{Case 2:}   $C(x|y) \lep I(\bar x: \bar y|z,t)  + I(\bar x : z,t | \bar y)$. We combine this with \eqref{eq:sec5-2} and obtain 
\[
n \eqp  C(x|y) \lep I(\bar x : z,t | \bar y) =  I(\bar x : t | \bar y) +  I(\bar x : z | \bar y,t) \lep   I(\bar x : t | \bar y) +  C( z | \bar y,t) \lep  I(\bar x : t | \bar y).
\]
Thus, we have $ I(\bar x : t | \bar y) \gep n$. This means that the information revealed by  Alice  on her input is at least $n$ bits.
\end{proof}

\begin{remark}
This statement easily adapts to the case when Alice and Bob do not use private randomness to achieve their protocol by setting $r_x$ and $r_y$ to the empty words. This way, we get $\bar x = x$ and $\bar y = y$ (and $\bar G = G)$. One can go through the proof again doing these substitutions and get to the same conclusion that Alice or Bob need to send all of their information unknown to the other participant to agree on a secret key. Moreover, in both cases (with or without private randomness) the proof relativizes and the statement remains true if the participants (and the eavesdropper) both have access to a \emph{public} source of random bits (see Remark~\ref{rem:relativization}).   
\end{remark}

\begin{remark}
\label{rem:o(n)}
We formulated and proved Theorem~\ref{th:secret_key_agreement} within a logarithmic precision: in the statement of the theorem and in the proof, all equalities and inequalities hold up to an additive term $O(\log n)$, where $n$ is the mutual information between Alice's and Bob's inputs.
This the most natural (and the tightest possible) precision for most information-theoretic properties involving Kolmogorov complexity. However, we could repeat the argument step-by-step with a coarser precision, up to additive term s$O(\sqrt{n})$, or $O(n^{3/4})$, or just up to  $o(n)$. A version of  Theorem~\ref{th:secret_key_agreement}  with a coarser precision might be helpful in some applications, for example, in the interplay between Kolmogorov complexity and Shannon entropy, as we see in the next section.
\end{remark}

\section{Implications for  Shannon's information theory}
\label{s:shannon}

In this section we use Theorem~\ref{th:secret_key_agreement} to prove a similar fact in the  framework of the classical information theory. In fact, this is the original approach to the idea of information-theoretic security for secret key agreement, see  \cite{secret-key-1,secret-key-2}.
We consider the following probabilistic setting.
Let $G=(L, R; E)$ be a graph as in Example~\ref{e:poly} over a field $\mathbb{F}$ o size  $2^n$. We introduce a uniform distribution on $E$. So we sample a random edge in the graph, and denote $({\cal X},{\cal Y})$ the left and the right end of this random edge. We observed in Proposition~\ref{p:standard-profile}  that 
\[
H({\cal X}) = 2n,\ H({\cal Y}) = (d+1)n,\ H({\cal X},{\cal Y}) = (d+2)n
\]
and
\[
I({\cal X}:{\cal Y}) = H({\cal X}) + H({\cal Y}) - H({\cal X},{\cal Y}) = n,
\]
where $H(\cdot)$ stands for Shannon's entropy and $I(\cdot ) $ stands for Shannon's mutual information. 

We study deterministic communication protocol with two participants, Alice and Bob. We provide Alice with this edge's left end (a random variable ${\cal X}$) and Bob with this edge's right end (a random variable ${\cal Y}$). Alice and Bob exchange messages and both compute at the end of their communication some value $z$ in $\{0,1\}^n$, which will be interpreted as a secret key. 

Let us fix any communication protocol. As we run this protocol on a randomly sampled pair of inputs (for Alice and Bob), the resulted secret key becomes a random variable (we denote it $\cal Z$). Besides, the transcript of the protocol (the concatenation of all messages sent by Alice and Bob to each other) also becomes a random variable (we denote it $\cal T$). The aim is to maximize the entropy of $\cal Z$ while the mutual information $I({\cal Z}:{\cal T})$ remains negligibly small.
That is, we want to maximize the entropy of the secret key $\cal Z$ without revealing much information on it to the eavesdropper who can intercept the communication
between Alice and Bob.

In what follows we list several claims known about this setting. The first six claims are previously known results, presented here as motivation for the main result of this section. Only the last one is new, and we prove it at the end of this section.

\begin{claim} 
\label{claim-i}
There exists a communication protocol that allows Alice (given $\cal X$) and Bob (given $\cal Y$) to agree on a secret key $\cal Z$ such that $H({\cal Z}) = n$, and the mutual information between $\cal Z$ and the transcript of the protocol is at most $o(n)$.
\end{claim}
In slightly different setting (for $\cal X$  and $\cal Y$ sampled from i.i.d. sources) this claim was proven in  \cite{secret-key-1,secret-key-2}. 
Information-theoretic secret key agreement was studied, for example, in 
one-shot setup was discussed in \cite{one-shot-key-agreement-1,one-shot-key-agreement-2,one-shot-key-agreement-3,one-shot-ska-2}, see also \cite{andrei-marius,emirhan}. 
For our specific setting, we will give a simple direct proof based on the ideas from  \cite{secret-key-1,secret-key-2}.

\begin{claim} 
\label{claim-ii} 
Moreover, there exists a communication protocol that allows  Alice (given $\cal X$) and Bob (given $\cal Y$) to agree on a secret key $\cal Z$ with $H({\cal Z}) = n$, where
Alice sends to Bob $n$ bits and Bob sends nothing.
\end{claim}
See the proof below.
 
 \begin{claim} 
\label{claim-iii}
Also, there exists a communication protocol that allows Alice (given $\cal  X$) and Bob (given $\cal Y$) to agree on a secret key $\cal Z$ such that $H({\cal Z}) = n$, where
Bob sends to Alice $dn$ bits and Alice sends nothing.
\end{claim}
See the proof below.

\begin{claim} 
\label{claim-iv}
For every $\epsilon>0$, there exists no communication protocol that would allow Alice (given $\cal X$) and Bob (given $\cal Y$) to agree on a secret key ${\cal Z}'$ with entropy  $H({\cal Z}')>(1+\epsilon)n$ such that the mutual information between $Z'$ and the transcript of the protocol is still  $o(n)$. Thus, the key size in Claim~\ref{claim-i},  Claim~\ref{claim-ii}, and Claim~\ref{claim-iii}  is asymptotically optimal.
\end{claim}
In \cite{secret-key-1,secret-key-2} a version of this claim was proven for  $\cal X$  and $\cal Y$ sampled from i.i.d. sources. See also \cite{one-shot-key-agreement-3,one-shot-ska-2} for lower bounds in one-shot setting and 
\cite{andrei-marius} specifically for the case where  $\cal X$ and $\cal Y$ are points and lines on the plane. 

\begin{claim} 
\label{claim-v}
In every communication protocol where Alice and Bob agree on a secret key $\cal Z$ with  $H({\cal Z}) = n - o(n)$, the communication complexity of this protocol 
(i.e., the maximum number of bits communicated, maximized over all inputs) is at least $n - o(n)$. Thus, the communication complexity in the protocol from Claim~\ref{claim-ii} is asymptotically optimal.
\end{claim}
For the proof of this claim see \cite{emirhan}.
 
\begin{claim} 
\label{claim-vi}
Moreover, in every communication protocol where Alice and Bob agree on a secret key $Z$ with  $H(Z) = n-o(n)$, either there exists a pair of inputs such that Alice sends $n-o(n)$ bits of information to Bob, or  there exists a pair of inputs such that Bob sends $dn-o(n)$ bits of information to Alice.
So, in some sense, the protocols from Claim~\ref{claim-ii} and Claim~\ref{claim-iii} cannot be improved.
\end{claim}

\begin{proof}[Sketch of proof of Claim~\ref{claim-i} and Claim~\ref{claim-ii}:] Alice is given a (sampled at random) point $(x_1,x_2) \in \mathbb{F}^2$ on the affine plane. 
Let her send to Bob the first component of this pair, i.e., $x_1$. Having received this message, Bob (who is given a polynomial incident to the point $(x_1,x_2)$) can compute  $x_2$. Then, Alice and Bob use the value $x_2$ as the secret key.

From the construction it follows immediately that the entropy of the key (the value of $x_2$) is equal to $\log |\mathbb{F}|$, and that the mutual information between the key and the transcript (i.e., between the values of $x_1$ and $x_2$ sampled at random) is equal to $0$. 
\end{proof}

\begin{proof}[Sketch of proof of Claim~\ref{claim-iii}:] Bob is given (sampled at random) $d+1$ coefficients of a polynomial 
\[
S(t)= s_0 + s_1t + \ldots + s_d t^d.
\]
Let him send to Alice all coefficients of this polynomial except for $s_0$. Having received this message, Alice (who is given an evaluation point for this polynomial)- computes $s_0$. Then, the value of $s_0$ is taken as the secret key.

It is easy to see that the entropy of the key (the value of $s_0$) is equal to $\log |\mathbb{F}|$, and that the mutual information between the key and the transcript (i.e., between the values of $s_0$ and $(s_1,\ldots,s_d)$) 
is equal to~$0$. 
\end{proof}

\begin{proof}[Sketch of the proof of Claim~\ref{claim-vi}:]
To prove this probabilistic statement, we will use  the standard connection between Kolmogorov complexity and Shannon entropy, see \cite{grunwald} or the textbooks \cite{li-vitanyi,suv} for an elaborate discussion of the interplay between the algorithmic and classical versions of information theory. Technically, it is enough to apply the following lemma.
\begin{lemma}
\label{lemma:shannon-kolmogorov}
Let $A$ be a finite set of cardinality $M$. 
We assume that this set is given explicitly in the sense that there exists a program of size $k$ that produces the list of all elements of $A$ and stops.
Let $\cal U$ be a random variables distributed on $A$ such that $H({\cal U}) \ge \log M - \Delta_1$. Then there exists $\Delta_2 = O(\Delta_1 + k+\log \log M)$ such that 
\[
\pr_{u\leftarrow \cal U}[C(u)< \log M - \Delta_2] < 1/10.
\]
In other words, with high probability Kolmogorov complexity of $u$ is bigger (or not much less) than the Shannon entropy of the distribution.
\end{lemma}
\begin{remark}
This lemma is interesting in the case when $\Delta_2 \ll  \log M $ and, therefore, $H({\cal U}) $ is close to  $\log M$ 
(we achieve the equality $H({\cal U}) = \log M$ only with the uniform distribution on a set of cardinality $M$).
\end{remark}
This lemma follows easily from \cite[Theorem~2.10]{grunwald}, see Appendix for details.
\begin{corollary}
\label{cor:shannon-kolmogorov}
(a) If the list of all elements of a set $A\subset \{0,1\}^N$ can be produced by an algorithm of size $k$,  and there is a probability distribution $\cal A$ having $A$ as its support 
such that $H({\cal A}) \ge \log |A| - \Delta$,
then for a randomly chosen (with the  distribution $\cal A$) element  $u\in A$ we have 
\[
\pr_{u\leftarrow \cal U}\big[    \log |A|  - O(\Delta) - O(\log N)  - O(k)  \le   C(u) \le \log |A| +  O(k) \big] \ge 9/10.
\]
In particular, if $H({\cal A}) \ge  \log |A| - o(N)$  and $k = O(\log N)$, then
\[
\pr_{u\leftarrow \cal U}\left[  \log |A| - o(N) \lep C(u)  \lep \log |A| \right] \ge 9/10.
\]
(b) With the same assumption on $A$ and for any finite set $B$, if $({\cal A}, {\cal B})$ is distributed on $A\times B$
and $H({\cal A} | {\cal B}) \ge \log |A| - \Delta$, 
then  for a randomly chosen pair  $(u,v)\in A\times B$ we have 
\[
\pr_{(u,v)\leftarrow{\cal A}, {\cal B}}\big[    \log |A| - O(\Delta)   - O(\log N)  - O(k)  \le   C(u|v) \le \log |A| +  O(k) \big] \ge 9/10.
\]
In particular, if $H({\cal A} | {\cal B}) \ge \log |A| - o(N)$  and $k = O(\log N)$, then
\[
\pr_{(u,v)\leftarrow{\cal A}, {\cal B}}\left[   \log |A| - o(N)   \lep C(u|v)  \lep \log |A|  \right] \ge 9/10.
\]

\end{corollary}
\begin{proof}
(a) We apply Lemma~\ref{lemma:shannon-kolmogorov} and observe that with  probability $>9/10$ the value of $C(u)$ is sandwiched 
between the lower bound $\log |A| - O(\Delta)- O(\log N) - O(k)$  from the lemma and the trivial  upper bound 
\[
\log |A| + [\text{self-delimiting description of the list of all elements of $A$}].
\]
(b) We apply claim~(a) of the corollary for the conditional distributions on $A$ given ${\cal B} =v$ for each $v\in B$, 
and then take the average over all  $v$ with respect to the distribution $\cal B$.
\end{proof}

Let us fix now a communication protocol in which Alice and Bob agree on a secret key. 
We denote by $\cal Z$ the produced random value of the secret key (a random variable sampled together with $({\cal X},{\cal Y})$)
and by $\cal T$ the transcript of the protocol (also a random variable sampled together with $(X,Y)$).
By the assumption,  $H({\cal Z})\ge n-o(n)$ and $I({\cal Z}:{\cal T}) = o(n)$.

We apply Corollary~\ref{cor:shannon-kolmogorov}~(a) to each of the following random variables: $({\cal X},{\cal Y})$, ${\cal X}$, $\cal Y$, $\cal T$;
we also apply Corollary~\ref{cor:shannon-kolmogorov}~(b) to $({\cal T},{\cal Z})$.  
We conclude that with probability $>5/10$ for a randomly chosen (over the uniform distribution) 
$(x, y)\in E$ the protocol produces a transcript $t$ and the final result $z$ such that 
\begin{itemize}
\item $C(x,y) \eqp 3n$, 
 $C(x) \eqp 2n$, 
 $C(y) \eqp 2n$,  i.e.,  the complexity profile of $(x,y)$ matches \eqref{eq:profile}, 
\item $ n-o(n) \le C(z) \le n + O(\log n)$,  
 \item $C(z) - C(z|t) \le n - (n-o(n)) = o(n)$. 
\end{itemize}
We take such a pair $(x, y)$ and apply Theorem~\ref{th:secret_key_agreement} 
(or, more exactly, its coarser version with $o(n)$ instead of $O(\log n)$ terms, see Remark~\ref{rem:o(n)}).
It implies that for this pair of inputs $(x,y)$  the communication protocol requires that either Alice sends messages with the total length $n-o(n)$ bits,
or Bob sends messages with the total length $dn-o(n)$. This concludes the proof.
\end{proof}

\begin{remark}
We cannot claim that for every communication protocol either Alice sends $n$ bits for all (or almost all)  pairs of inputs $(x,y)$, or Bob sends $dn$ bits for  all (or almost all) pairs of inputs $(x,y)$. In fact, there exist protocols such that in same cases (for some pairs of inputs) Alice sends $n$ bits, while in other cases Bob sends $dn$ bits. In particular, there exist communication protocols where Shannon's entropy of Alice's messages (on a random pair of inputs $({\cal X},{\cal Y})$) is much less than $n$, and Shannon's entropy of Bob's messages (also on a random pair of inputs $({\cal X},{\cal Y})$)  is much less than $dn$. 
But we do claim that in almost all branches of the protocol one of the participants sends the maximal number of bits (about $n$ bits for Alice or about $dn$ bits for Bob).
Such a ``non-convex'' bound is somewhat unusual for Shannon's setting; it cannot be proven directly with the conventional technique of linear inequalities for Shannon's entropy.
\end{remark}

\section{Application II : Impossibility results on secure conditional descriptions}
\label{s:app2}

An.~Muchnik came up in \cite{muchnik-crypto} with the following communication problem. Sender should transmit to Receiver a \emph{clear message} $x$ in an encrypted form assuming that Sender and Receiver share a \emph{secret key} $y$. The encrypted message should not reveal any information on $x$ to Eavesdropper who owns some side information $z$. Muchnik argued that the possibility to find a suitable cipher code (``useful'' for Received but ``useless'' for Eavesdropper)
may depend not simply on the information quantities but on the combinatorial structure of the data sets $x,y,z$.

Proposition~\ref{p:muchnik} below is inspired by \cite[Theorem~5]{muchnik-crypto}. In some respects, our result is weaker: we use a sharp definition of ``usefulness'' and ``uselessness'' while Muchnik's theorem captures a subtle trade-off between smooth versions of these parameters. On the other hand, we use a broader and more natural condition on $x$ and $y$ (we assume independence in the conventional sense, while Muchnik assumed independence with respect to the oracle solving the halting problem for Turing machines). Our argument is purely combinatorial, while Muchnik's proof employed tools from the computability theory.

\begin{definition}
We say that a string (message) $p$ is \emph{useful}  to get a string $x$ given a string $y$ if $C(x|p,y) \eqp 0$, and  $p$ is \emph{useless}  to get a string $x$ given a string  $z$ if $C(x|p,z)\eqp C(x|z)$.
\end{definition}

\begin{proposition}
\label{p:muchnik}
For any integer numbers $d = O(n)$ and $n \ge 0$ and any pair of strings $x,y$ such that $C(x) \eqp C(y) \eqp n$ and $I(x:y) \eqp 0$ there exists a string  $z$ such that 
\[
\begin{array}{l}
C(z)\eqp(d+1)n, \ I(x:z) \eqp I(y:z) \eqp0,
\end{array}
\] 
(with $d = O(n)$) and there is no string $p$ significantly smaller than $(d + 1)n$  (let us say, smaller than $dn + 0.99n$) such that at once\\
(i) $p$ is useful to get $x$ given $y$, and \\
(ii) $p$ is useless to get $x$ given $z$.

\end{proposition}
\begin{proof}
We take $((x,y),z)$ as in  Example~\ref{e:poly} ($z$ is a polynomial of degree $d$ going through the point $(x,y)$). Since  $C(x|y)\eqp n$, for every $p$ that is useful to get $x$ from $y$, we have  $I(p:x,y) \gep n$. Now we assume for the sake of contradiction that $p$ is useless to get $x$ from $z$. This means that $I(p:x|z) \eqp 0$. In this case $I(p:(x,y):z)$ is equal to
\[
I(p:x,y) - I(p:x,y|z) \eqp I(p:x,y) - I(p:x|z) - I(p:y|x,z)\eqp  I(p:x,y) - 0 - 0 \eqp  I(p:x,y)  \eqp n
\]
and
\[
I(x,y:z|p) \eqp I(x,y:z) - I(p:(x,y):z) \eqp n-n \eqp 0.
\]
We apply Theorem~\ref{th:1} to the pair $((x,y), z)$ with the string $p$ (with $m=0$, i.e., without private randomness) and conclude 
\[
C(z|x,y) \lep I(z:p|x,y)
\text{ or }
C(x,y|z) \lep I(x,y:z|p)  + I(x,y:p|y)+ \log d.
\]
Thus, since $\log d = O(n)$ and  $I(x,y:z|p) \eqp 0$, the conclusion of Theorem~\ref{th:1} rewrites to 
\[
dn \lep I(p:z|x,y) \text{ or } n \lep I(p:x,y|z).
\]
The second alternative of the disjunctive is impossible (due to the ``uselessness'' condition), so the first one must be true. Therefore,
\[
I(p:x,y,z) \eqp I(p:z|x,y) + I(p:x,y) \gep dn +n,
\]
which is impossible if the length of $p$ is less than $dn + 0.99n$. Thus, the assumption of uselessness of $p$ for $z$ was false. 
\end{proof}

In Proposition~\ref{p:muchnik} it is crucial that the size of the message $p$ significantly smaller than $(d+1)n$ (the size of $z$). For $(x,y,z)$ used in the proof of this proposition it is easy to find $p$ of length $(d+1)n$ that is at once useful to get $x$ from $y$ and useless to get $x$ from $z$. Indeed, it is enough to let $p=z$. 
We can prove a much stronger statement if we assume that $p$ must be independent of $z$ conditional on $(x,y)$. (This condition is natural if $p$ is sampled by Sender who knows the clear message $x$ and the secret key $y$ but not the side information $z$ known to Eavesdropper). In the next proposition we allow $p$ to be of size $\mathrm{poly}(n)$ (possibly much longer than $x$, $y$, and $z$). Moreover, we show that if $p$ is useful to get $x$ from $y$, then it is useful to get $x$ from $z$ (this condition is much stronger than the negation of uselessness).
\begin{proposition}\label{prop:4}
For any integer numbers $d \ge 0$ and $n\ge 0$ and any pair of strings $(x,y)$ such that $C(x) \eqp C(y)\eqp n$ and $I(x:y)\eqp 0$, there exists a string  $z$ such that
\[
\begin{array}{l}
C(z)\eqp(d+1)n \text{ and } I(x:z) \eqp I(y:z) \eqp0,
\end{array}
\] 
and for every $p$ of length $\mathrm{poly}(n)$ satisfying $I(p:z|x,y)\eqp 0$, if $p$ is useful to get $x$ from $y$, then $p$ is useful to get $x$ from $z$.
\end{proposition}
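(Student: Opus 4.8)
The plan is to reuse the polynomial construction of Example~\ref{e:poly} to build $z$, and then to apply Corollary~\ref{cor:1} (equivalently Theorem~\ref{th:2}) to the triple $(z,(x,y),p)$. First I would construct $z$. Identify the ground field with $\mathbb{F}_{2^n}$ and view $(x,y)$ as a point of $\mathbb{F}_{2^n}^2$. The degree-$\le d$ polynomials whose graph passes through $(x,y)$ form an affine subspace of dimension $d$, so there are exactly $2^{dn}$ of them; I would take for $z$ one that is incompressible relative to $(x,y)$, i.e. with $C(z\mid xy)\eqp dn$, which exists by counting. Then $C(xyz)\eqp C(xy)+C(z\mid xy)\eqp 2n+dn$ (using $C(xy)\eqp 2n$, from $I(x:y)\eqp 0$). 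Since $y=z(x)$ is a function of $(x,z)$ and $x$ is one of the $\le d$ roots of $z(t)=y$, we get $C(x\mid yz)\eqp C(y\mid xz)\eqp 0$, hence $C(xz)\eqp C(yz)\eqp C(xyz)\eqp (d+2)n$; combining with the trivial bounds $C(z)\lep (d+1)n$ and $C(x\mid z)\le C(x)\eqp n$ and the chain rule $C(xz)\eqp C(z)+C(x\mid z)$ forces $C(z)\eqp (d+1)n$, $C(x\mid z)\eqp C(y\mid z)\eqp n$, and therefore $I(x:z)\eqp I(y:z)\eqp 0$. In particular $(z,(x,y))$ is a typical edge of the graph of Example~\ref{e:poly}, with $z$ playing the role of the left vertex $u$ and $(x,y)$ the role of the right vertex $v$ (so $C(u)\eqp (d+1)n$, $C(v)\eqp 2n$, $C(v\mid u)\eqp n$).

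Next I would run the impossibility argument. Let $p$ have length $\mathrm{poly}(n)$, satisfy $I(p:z\mid xy)\eqp 0$, and be useful to get $x$ from $y$, i.e. $C(x\mid py)\eqp 0$. From the last equality, $I(p:x\mid y)\eqp C(x\mid y)\eqp n$, hence $I(p:xy)=I(p:y)+I(p:x\mid y)\gep n$. Now apply Corollary~\ref{cor:1} with $u:=z$, $v:=(x,y)$, $w:=p$: the condition of Theorem~\ref{th:2}, namely $I(w:u\mid v)\eqp 0$, is exactly our hypothesis $I(p:z\mid xy)\eqp 0$, and the extra constraint $I(w:v)\gep n$ is $I(p:xy)\gep n$. (Unwinding: Theorem~\ref{th:2} gives $I(p:z)\eqp 0$ or $I(p:(x,y)\mid z)\gep n$; since $I(p:z\mid xy)\eqp 0$ we have $I(p:xy)\eqp I(p:z)+I(p:(x,y)\mid z)\gep n$, so both cases yield $I(p:(x,y)\mid z)\gep n$, and $C((x,y)\mid z)\eqp n$ then forces $C((x,y)\mid pz)\eqp 0$.) We conclude $C(x\mid pz)\eqp 0$, i.e. $p$ is useful to get $x$ from $z$, as required.

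I expect no deep obstacle here — the content is carried entirely by Theorem~\ref{th:2}/Corollary~\ref{cor:1}, and the rest is bookkeeping. The two points that need genuine care are, first, verifying that the incompressible polynomial $z$ really has the profile $C(z)\eqp(d+1)n$, $I(x:z)\eqp I(y:z)\eqp 0$: this uses \emph{both} directions of the incidence relation (a point determines its image under $z$, and has boundedly many preimages), so that $(z,(x,y))$ genuinely qualifies as a typical edge. Second, the restriction to $p$ of length $\mathrm{poly}(n)$ is essential and must be used explicitly: it is precisely what keeps $\log C(uvw)=O(\log n)$, so that the $O(\log N)$ slack in Theorem~\ref{th:2} stays below the declared precision and the conclusion $C(x\mid pz)\eqp 0$ is non-vacuous; for an arbitrarily long $p$ the statement would degenerate.
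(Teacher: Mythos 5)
Your proof is correct and follows essentially the same route as the paper's: take $z$ from Example~\ref{e:poly} (a random degree-$d$ polynomial through $(x,y)$), observe that usefulness of $p$ for getting $x$ from $y$ yields $I(p:xy)\gep n$, and then invoke Corollary~\ref{cor:1} with $u:=z$, $v:=(x,y)$, $w:=p$. The paper's proof is terser (it delegates the construction of $z$ and the complexity-profile verification to Example~\ref{e:poly} and Proposition~\ref{p:muchnik} and states the reduction in one line), whereas you spell out the counting that gives $C(z)\eqp(d+1)n$ and $I(x:z)\eqp I(y:z)\eqp 0$, and you unwind Corollary~\ref{cor:1} explicitly; these details are exactly what the paper leaves implicit. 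Your closing remark about the $\mathrm{poly}(n)$ bound on $|p|$ being what keeps $O(\log N)=O(\log n)$ is a correct and worthwhile observation that the paper does not state explicitly.
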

\begin{proof}
Again, we take $((x,y), z)$ from Example~\ref{e:poly}: $z$ is a binary string representing a polynomial of degree $d$ over $\mathbb{F}_{2^n}$, and $x,y$ are two $n$-bit binary strings representing elements of the field $\mathbb{F}_{2^n}$ (we require that the point $(x,y)$ belongs to the graph of the polynomial represented by $z$). Similarly to the proof of Proposition~\ref{p:muchnik}, the usefulness implies $I(p:x,y) \gep n$. Since we assumed that $I(p:z |x,y) \eqp 0$, we can apply Corollary~\ref{cor:1}, which gives $C(x,y|p,z) = 0$, and we are done.
\end{proof}

Some closely related open questions were formulated in \cite{vereshchagin_crypto} (section 5). In particular the first open question can be answered using our previous proposition. We can first recall Theorem 2 from \cite{muchnik-crypto}:

\begin{theorem}[see \cite{muchnik-crypto}, Theorem 2]
Let $x$, $y$, and $z$ be three strings. Then there exists a string $p$ such that

\noindent (i) $C(x|y, p) \eqp 0$;

\noindent (ii) $C(x|z, p) \eqp \min\{C(x|z), C(y|z)\}$.
\end{theorem}

In other words, this claims that there exists a message that is \emph{useful} to reconstruct $x$ using $y$ and \emph{useless} to produce $x$ using $z$. In the proof of this theorem, a string $p$ of exponential size (in the size of $x$, $y$ and $z$) is used. The first open question in \cite{vereshchagin_crypto} asks whether this is also true for a string $p$ of polynomial size.

We can show that this is false when we add the condition that $p$ is independent from $z$ conditional to $x$ and $y$: $I(p:z|x,y) \eqp 0$ (again, this is a natural condition since there is no reason why the sender should know the input of the adversary). By taking $x$, $y$, and $z$ as in Proposition~\ref{prop:4}, a direct application of this statement allows to conclude that there is no string $p$ of polynomial size such that $p$ is independent from $z$ (conditional on $\langle x, y \rangle$) and at the same time useful to get $x$ from $y$ and useless to get $x$ (or $y$) from $z$. Actually every string $p$ useful to get $x$ from $y$ will be also useful to get $x$ from $z$, which compromises the security of the communication.

\appendix

\section{Proof of Lemma~\ref{p:spectrum}}

\begin{proof}
Let $q$ be the size of the field $\mathbb{F}_n$ and let
\[
A =\left(
\begin{array}{cc}
0 & M\\
M^\top &0
\end{array}
\right)
\]
be the adjacency matrix of the graph from Example~\ref{e:poly}. Recall that, $|L|  = q^2$ and $|R| = q^{d+1}$. The degree on the left is $D_L  = q^d$ (numbers of polynomials of degree $d$ that go through a specific point); the degree on the right is $D_R  = q$ (number of points in the graph of a polynomial). It is enough to estimate the eigenvalues of $A^2$. To this end, we need to estimate the eigenvalues of $M\cdot M^\top $, 
which is the matrix of paths of length $2$ in the graph, starting and finishing in $L$ (the set of points). Starting at some $x\in L$, we can go to a $y\in R$ (which corresponds to some polynomial whose graph goes through the point $x$) 
and then either come back to the same $x$, or end up in a different $x'\in L$. For a fixed $x$, the number of paths 
\[
x \to y \to x
\]
is equal to $D_L = q^{d}$ (any $y$ connected to $x$ can serve as the middle point of the path). We now estimate the number of paths
\[
x \to y \to x'.
\]
For $x \neq x'$. Let us denote $x = (x_1, x_2)$ and $x' = (x_1', x_2')$. If $x_1 \neq x_1'$, (that is, the horizontal coordinates are different), the number of paths 
is equal to the number of polynomials $y$ that go through both $x$ and $x'$, which is $q^{d-1}$.

If $x_1 = x_1'$ (that is, the horizontal coordinate is the same) the numbers of paths is zero. Let $S$ be the matrix of size $q^{2} \times q^{2}$ such that $S_{x,x'} = 1$ if $x_1 = x_1'$ and zero otherwise. Hence $S$ is the matrix of points that share the same first coordinate. It is not hard to see that $S = I_{q} \otimes J_{q}$ (where $I_i$ and $J_i$ are respectively the identity matrix and the all one matrix of size $i$). Indeed, $S$ is the matrix with $q$ all ones blocks of size $q \times q$ in the diagonal and zero elsewhere. Hence, we get
\[
M\cdot M^\top =
 q^{d}\cdot I_{q^{2}} +  q^{d-1} \cdot (J_{q^{2}} - S). 
\]
The identity matrix has the eigenvalue $1$ of multiplicity $|L| = q^{2}$, and the all-ones matrix has the eigenvalue $|L|$ of multiplicity $1$ and the eigenvalue $0$ of multiplicity $(|L|-1)$.  These matrices obviously have a common basis of eigenvectors. The matrix $S$ has eigenvalue $q$ with multiplicity $q$ and zero with multiplicity $(q-1)q$.

Let $v_1 = (1, \dots 1)$ be a vector of dimension $q^2$, the eigenvector associated to the largest eigenvalue of $M\cdot M^\top$, and let $v$ the vector associated with the second largest eigenvalue of $M\cdot M^\top$. This matrix is symmetric, hence $v_1$ and $v$ are orthogonal. Moreover, $v_1$ is also the eigenvector associated to the largest eigenvalue of both $J_{q^2}$ and $S$. Hence, $I_{q^{2}} \cdot v = v$, $J_{q^{2}} \cdot v = 0 \cdot v$. For $S \cdot v$, we have the choice between $0 \cdot v$ or $q\cdot v$. The former case results in a larger eigenvalue (for $M\cdot M^\top$) than the latter (but still smaller than the largest eigenvalue $q^{d+1}$). Therefore, 

\[
M\cdot M^\top \cdot v =
q^{d}\cdot I_{q^{2}} \cdot v +  q^{d-1} \cdot (J_{q^{2}} - S) \cdot v
= q^{d} \cdot v +  q^{d-1} \cdot (0 \cdot v - 0 \cdot v) = q^d \cdot v.
\]

The eigenvalues of $A$ are the square roots of that of $M\cdot M^\top$.
\end{proof}

\section{Proof of Lemma~\ref{lemma:shannon-kolmogorov}}

\begin{proof}
Let us introduce some notation. Let $m=\log M$. Since set $A$ is given by a program of size $k$, for every $u\in A$ we have
\[
C(u) \le \log |A| +  [\text{self-delimiting description of the list of all elements of $A$}]  = m + O(k).
\] 
Further, we denote $KP(u)$ the \emph{prefix-free Kolmogorov complexity}, see \cite{li-vitanyi,suv} for details.
We use the standard properties of prefix-free Kolmogorov complexity:
\[
C(u) \le KP(u) + O(1) \le C(u) + O(\log C(u)).
\]
Thus, for all $u\in A$ we have $KP(u) \le m + O(k+\log m)$.
In what follows, we prove that, with probability close to $1$,
\[
KP(u)\ge  m - O(\Delta_1 + k+\log m),
\]
which implies the required bound for the plain Kolmogorov complexity $C(u)$.

We will need the \emph{Noiseless Coding Theorem} saying that for any distribution $\cal U$
\[
H({\cal U}) = \sum\limits_{u\in A} \pr[{\cal U} = u] \log \frac1{\pr[{\cal X} = u] } \le   \sum\limits_{u\in A} \pr[{\cal X} = u] \cdot KP(u), 
\]
see, e.g.,  \cite[Theorem 2.10]{grunwald}.
Finally, we denote by $h$ Shannon's entropy of $\cal U$, 
\[
h:= \sum\limits_{u\in A} \pr[{\cal U} = u] \log \frac1{\pr[{\cal X} = u] },
\]
and by $q$ the probability of the ``undesirable'' event,
\[
q:=\pr_{u\leftarrow \cal U}[KP(u)< m-\Delta_2].
\]
Now we are ready to estimate  the value of $q$.
\[
\begin{array}{rcl}
m - \Delta_1  &\le & H({\cal U}) \ \text{/* assumption of the lemma */}
\\
&=& \sum\limits_{u\in A} \pr[{\cal U} = u] \log \frac1{\pr[{\cal X} = u] }\ \text{/* definition of entropy */}
 \\
&\le&  \sum\limits_{u\in A} \pr[{\cal U} = u] \cdot KP(u) \ \text{/* Noiseless Coding Theorem */}
\\
& = &
\sum\limits_{u\in A \ :\ KP(u)< m-\Delta_2 } \pr[{\cal X} = u] \cdot KP(u)  
+ \sum\limits_{u\in A \ :\ KP(u)\ge m-\Delta_2 } \pr[{\cal X} = u] \cdot KP(u) 
\\
&\le &
\sum\limits_{u\in A \ :\ KP(u)< m-\Delta_2 } \pr[{\cal X} = u] \cdot (m-\Delta_2) +{} \\
&&
\hspace{10em} + \sum\limits_{u\in A \ :\ KP(u)\ge m-\Delta_2 } \pr[{\cal X} = u] \cdot (m + O(k)  +O (\log m))
\\
&\le &
q  \cdot (m-\Delta_2) 
+ (1-q) \cdot (m + O(k) +O (\log m)). \ \text{/* the definition of $q$ */}
\end{array}
\]
It follows that
\[
q\le \frac{ \Delta_1  + O(k) + O(\log m)}{ \Delta_2 } ,
\]
and it remains to choose $\Delta_2$ so that the last fraction is less than $1/10$.
\end{proof}

\end{document}